\newtheorem{theorem}{Theorem}[section]
\newtheorem{corollary}[theorem]{Corollary}
\newtheorem{definition}[theorem]{Definition}
\newtheorem{lemma}[theorem]{Lemma}
\newtheorem{observation}[theorem]{Observation}
\definecolor{Gray}{gray}{.8}
\newcommand{\NN}{\mathbb{N}}
\newcommand{\BO}{\mathcal{O}}
\newcommand{\IB}{\mathbb{B}}
\newcommand{\IT}{{\mathbb{T}}}
\newcommand{\encset}[1]{\mathcal{P}_{2^{k}}(#1)}
\newcommand{\Asigj}{A_{\sigma,j}}
\newcommand{\Asigjk}{\mathcal{A}_{\sigma,j}^k}
\newcommand{\Asig}{A_{\sigma}}
\newcommand{\Asigk}{\mathcal{A}_{\sigma}^k}
\newcommand{\res}{\operatorname{res}}
\newcommand{\bigstarr}{\mathop{\raisebox{-.7pt}{\ensuremath{\mathlarger{\mathlarger{\mathlarger{*}}}}}}}
\newcommand{\mfu}{\ensuremath{\mathfrak{u}}}
\newcommand{\M}[1]{\left(\mathcal{M}_{#1}\right)}
\newcommand{\AND}{\textup{\texttt{and}}}
\newcommand{\OR}{\textup{\texttt{or}}}
\newcommand{\NOT}{\textup{\texttt{not}}}
\newcommand{\MUX}{\textup{\texttt{MUX}}}
\newcommand{\MMUX}{\operatorname{\mathcal{M}-\MUX}}
\begin{document}
\title{Small Hazard-Free Transducers}

\author{\IEEEauthorblockN{Johannes Bund\IEEEauthorrefmark{1}, \and Christoph Lenzen\IEEEauthorrefmark{2}, \and Moti Medina\IEEEauthorrefmark{3}}\\\vspace{1em}\IEEEauthorblockA{\IEEEauthorrefmark{1}\textit{Université Paris-Saclay, CNRS, ENS Paris-Saclay, Laboratoire Méthodes Formelles}\\
Gif-sur-Yvette, France \\
jobund@lmf.cnrs.fr}\\\vspace{1em}
\IEEEauthorblockA{\IEEEauthorrefmark{2}\textit{CISPA Helmholtz Center for Information Security}\\
 Saarbr\"ucken, Germany \\
 lenzen@cispa.de} \\\vspace{1em}
 \IEEEauthorblockA{\IEEEauthorrefmark{3}\textit{Alexander Kofkin Faculty of Engineering, Bar-Ilan University}\\
Ramat Gan, Israel \\
moti.medina@biu.co.il}
\thanks{This work is an extended version of the conference paper presented at ITCS 2022~\cite{bund2022small}, it is accepted for publication in IEEE Transactions on Computers.
\emph{Funding:} {This project has received funding in part from the European Research Council
(ERC) under the European Union’s Horizon 2020 research and innovation programme
(grant agreement 716562).}
{This research was supported in part by the Israel Science Foundation under Grants
867/19 and 554/23.}
{This research was funded in part
by the French Agence Nationale de la Recherche (ANR) under the project DREAMY (ANR-21-CE48-0003).}
{Johannes Bund was affiliated with the CISPA Helmholtz Center for Information Security and the Bar-Ilan University when working on this paper.}
}}

\IEEEtitleabstractindextext{\begin{abstract}
In digital circuits, hazardous input signals are a result of spurious operation of bistable elements. For example, the problem occurs in circuits with asynchronous inputs or clock domain crossings. Marino (TC’81) showed that hazards in bistable elements are inevitable. Hazard-free circuits compute the “most stable” output possible on hazardous inputs, under the constraint that it returns the same output as the circuit on stable inputs.
Ikenmeyer et al. (JACM’19) proved an unconditional exponential separation between the hazard-free complexity and (standard) circuit complexity of explicit functions. Despite that, asymptotically optimal hazard-free sorting circuit are possible (Bund et al., TC’19). This raises the question: Which classes of functions permit efficient hazard-free circuits?
We prove that circuit implementations of transducers with small state space are such a class. A transducer is a finite state machine that transcribes, symbol by symbol, an input string of length n into an output string of length n. We present a construction that transforms any function arising from a transducer into an efficient circuit that computes the hazard-free extension of the function. For transducers with constant state space, the circuit has asymptotically optimal size, with small constants if the state space is small.

 \end{abstract}

\begin{IEEEkeywords}
Hazards, Fault-Tolerant Circuits, Parallel Prefix Architecture, Computational Complexity
\end{IEEEkeywords}}

\maketitle

\IEEEdisplaynontitleabstractindextext

\IEEEpeerreviewmaketitle

\IEEEraisesectionheading{\section{Introduction}\label{sec:intro}}

\IEEEPARstart{D}{igital} circuit design relies on a fundamental abstraction of the physical
world. Electric voltages transmitted by wires are mapped to Boolean values,
where high voltages correspond to
$1$ ($\mathrm{true}$) and low voltages to $0$ ($\mathrm{false}$).
By virtue of this abstraction, the behavior of digital circuitry can be
described by Boolean formulas.
However, this description does not account for the behavior of digital circuits
in all cases:
it offers no way of representing signals that are unstable, transitioning,
oscillating, glitching, etc.

In this work, we study a classic extension of Boolean logic due to
Kleene~\cite[\S64]{kleene52meta}, which allows for the presence of unspecified
signals.
In the following, we refer to the Boolean values $\IB\coloneqq\{0,1\}$ as
\emph{stable,} while the additional third logical value $\mfu$ is the
\emph{unstable} value.
The resulting ternary set of logic values is denoted by
$\IT\coloneqq\{0,1,\mfu\}$.
Intuitively, $\mfu$ may evaluate to any stable state at any
point in the circuit, regardless of previous evaluations.
We regard $\mfu$ as a ``superposition'' of $0$ and $1$.

In Kleene logic, the basic gates\footnote{The specific choice of basic gates
does not matter, see~\cite{ikenmeyer18complexity}; hence, we stick to
$\AND$, $\OR$, and $\NOT$.} output a stable value if and only if the stable
inputs already determine this output. The natural extension of the basic gates
$\AND$, $\OR$, and $\NOT$ is given in Table~\ref{tab:gates}.
Given a circuit $C$; the function $C\colon \IT^n\to \IT^m$ is formally defined 
by induction over the circuit structure.

A key difference between Kleene logic and Boolean logic can be seen when regarding the laws
    of excluded middle and non-contradiction.
In Boolean logic both laws are true, but in Kleene logic they do not hold:
\[\OR(x,\NOT(x))\neq 1 \quad\text{ and }\quad\AND(x,\NOT(x))\neq 0,\] as $\NOT(\mfu)=\mfu$ and
$\AND(\mfu,\mfu)=\OR(\mfu,\mfu)=\mfu$. This drastically distinguishes $\mfu$
from an unknown Boolean value.
Accounting for this limitation and the fact that constant stable inputs can
easily be provided, we allow for constant-$0$ (and constant-$1$) gates in
addition to the basic gates $\OR$, $\AND$, and $\NOT$.

In general, digital logic cannot detect or prevent the propagation of unstable
signals~\cite{Mar81}, matching the above limitation present in Kleene logic.
Furthermore, \cite{friedrichs18containing}
extends the above model for combinational circuits to general clocked circuits.
Thus, understanding combinational circuits in the model we use in this paper is
equivalent to understanding the worst-case propagation of unstable signals in
general-purpose digital logic in a precise sense.

It has been shown that CMOS logic gates implement the stated
specification~\cite{BLM-TC-19}, i.e., standard cells meet the specification
given in Table~\ref{tab:gates}.
Hence, the presented circuits can be implemented in standard processes. 
This allows for easy integration in state-of-the-art computer architectures 
and embedded systems.

\subsection{Hazards and Hazard-free Circuits}

\begin{table}
  \caption{Behaviour of basic gates $\AND$, $\OR$, and $\NOT$.}
  \label{tab:gates}
\centering
\begin{tabular}{c|ccc}
  \AND & 0 & 1 & \mfu\\ \hline
  0 & 0 & 0 & 0\\
  1 & 0 & 1 & \mfu\\
  \mfu & 0 & \mfu & \mfu
\end{tabular}
\qquad
\begin{tabular}{c|ccc}
  \OR & 0 & 1 & \mfu\\ \hline
  0 & 0 & 1 & \mfu\\
  1 & 1 & 1 & 1\\
  \mfu & \mfu & 1 & \mfu
\end{tabular}
\qquad
\begin{tabular}{c|c}
\NOT &\\ \hline
0 & 1\\
1 & 0\\
\mfu & \mfu
\end{tabular}
\end{table}

We strive for circuits that behave similarly to basic gates when receiving
unstable inputs. That is, given all inputs, if changing an input bit from $0$ to $1$ does not affect the output, then setting this input bit to $\mfu$
should also not affect the output. 
A Hazard-free circuit should tolerate a fault at that specific input.
To formalize this concept, we make use of two
operations. The first is the
\emph{superposition,} which results in the unstable value $\mfu$ whenever its
inputs do not agree on a stable input value.

\begin{definition}[Superposition]\label{def:superposition}
Denote the \emph{superposition} of two bits by the operator $*\colon\IT \times \IT\rightarrow\IT$.
For $x,y\in \IT$, $x*y=x$ if $x=y$ and $x*y=\mfu$ otherwise. We extend the
$*$-operation to $x,y\in \IT^n$ by applying it at each bit $i\in\{1,\ldots,n\}$, such that
\[ (x*y)_i = \begin{cases} x_i &\text{if } x_i=y_i,\\\mfu &\text{otherwise.} \end{cases}\]
\end{definition}
The $*$-operation is
associative and commutative, hence for $X\subseteq \IT^n$ we define $\bigstarr
X \coloneqq x_1*\hdots *x_n$, where $x_1,\ldots,x_n$ is an arbitrary
enumeration of the elements of $X$.
The second operation, called \emph{resolution}, maps from ternary strings to
sets of Boolean strings by replacing each $\mfu$ with both stable values.
\begin{definition}[Resolution]\label{def:resolution}
Denote the \emph{resolution} by $\res\colon\IT\rightarrow\mathcal{P}(\IB)$.
For $x\in\IT$, $\res(x)=\{0,1\}$ if $x=\mfu$ and $\res(x)=\{x\}$ otherwise.
We extend this to bit strings of length $n$ in a natural way, by setting it for
$x\in \IT^n$
\begin{equation*}
\res(x)\coloneqq \{y\in
\IB^n\,|\,\forall i\in \{1,\ldots,n\}\colon x_i\neq \mfu \Rightarrow
y_i=x_i\}.
\end{equation*}
\end{definition}

For notational convenience, we extend all functions $f\colon X\rightarrow Y$ to
sets of inputs $X'\subseteq X$, by applying them to each element of the input,
i.e., $f(X')\coloneqq\{f(x)\,|\,x\in X'\}$. For instance,
$\res(\{0\mfu0,1\mfu0\})=\{000,010,100,110\}$. First, we observe that superposition and
resolution are not inverse functions.

\begin{observation}\label{obs:resstar}
  Let $X\subseteq\IT^n$, then taking the resolution of the superposition
  of all strings in $X$ may add further strings, i.e., $X\subseteq\res(\bigstarr X)$.
\end{observation}

Note that a strict superset relation is possible. For instance, if there are two strings $x,y\in X$
that disagree on more than $\log(|X|)$ positions,
then $|\res(\bigstarr X)|>2^{\log(|X|)}=|X|$ and hence $X\subset\res(\bigstarr X)$.
For example, if $X=\{101,110\}$, then $\res(\bigstarr X)=\res(1\mfu\mfu)=\{100,101,110,111\}$.

Recall that $\AND$ guarantees a stable output of $0$ if at least one of
its inputs is $0$ -- even if the other input is $\mfu$. As a toy example,
consider two circuits implementing a conjunction of $x$ and $y$. First
(needlessly involved)
$\AND(\OR(x,\NOT(\OR(y,\NOT(y)))),y)$ and second $\AND(x,y)$.
Under Boolean inputs, these expressions are equivalent, and so are the circuits.
In contrast, for inputs $x=0$ and $y=\mfu$, we get that
\[\AND(\OR(0,\NOT(\OR(\mfu,\NOT(\mfu)))),\mfu)=\mfu\neq 0=\AND(0,\mfu).\]
The first implementation has an
unstable output, even though the stable input of $0$ already determines that the
output should be $0$. This is referred to as a \emph{hazard.}

A convenient formalization of this concept is as follows. For a circuit $C$
with $n$ inputs and $m$ outputs, denote by $C(x)\in \IT^m$ the output it
computes on input $x\in \IT^n$. We say that $C$ \emph{implements} the Boolean
function $f\colon \IB^n\to \IB^m$, iff $C(x)=f(x)$ for all $x\in \IB^n$. The
desired behavior of the circuit is given by the \emph{hazard-free extension} of
$f$.
\begin{definition}[Hazard-free Extensions]\label{def:hfree}
For function $f\colon \IB^n\to \IB^m$, denote by $f_{\mfu}\colon \IT^n\to \IT^m$
its \emph{hazard-free extension}, which is defined by $f_{\mfu}(x)\coloneqq
\bigstarr_{y\in \res(x)}f(y)$.
\end{definition}
For any Boolean function $f$, there is a circuit implementing the hazard-free
extension of $f$~\cite{huffman57design}.
The hazard-free extension is the ``most precise'' extension of $f$ computable by
combinational logic. It is easy to show that
$(f_{\mfu})_i(x)=\mfu$ entails that $C_i(x)=\mfu$ for any circuit $C$
implementing $f$: Restricted to Boolean inputs $C$ and $f_{\mfu}$ are
identical. Changing input bits to $\mfu$ can change the output bits to $\mfu$ only,
and if $(f_{\mfu})(x)$ is $\mfu$ at position $i$ then $C(x)$ is $\mfu$ at position $i$.
In contrast, $C$ has a \emph{hazard} at $x\in \IT^n$ iff it deviates
from the desired behavior.
\begin{definition}[k-Bit Hazards]
Circuit $C$ implementing $f\colon \IB^n\to \IB^n$ has a \emph{hazard} at $x\in
\IT^n$ iff $C(x)\neq f_{\mfu}(x)$.
If, for some $k\in\NN$, $x$ contains at most $k$ many $\mfu$'s and $C$ has a hazard on
$x$, it is called a \emph{k-bit} hazard. $C$ is $k$-bit hazard-free if it
has no $k$-bit hazards, i.e., if
for every $x\in\IT^n$ where $\mfu$ appears at most $k$ times, $C(x) = f_{\mfu}(x)$.
\end{definition}
Note that $C$ having a hazard at $x$ is equivalent to $C(x)$ containing more
$\mfu$'s than necessary.
The smallest non-trivial example of a hazardous circuit is given by a naive
implementation of a multiplexer circuit. A multiplexer has the Boolean
specification $\MUX(a,b,s)=a$ if $s=0$ and $\MUX(a,b,s)=b$ if $s=1$. It can be
implemented by the circuit corresponding to the Boolean formula
$\OR(\AND(a,\NOT(s)),\AND(b,s))$, resulting in a hazard at
$(1,1,\mfu)$, cf.~\cite{ikenmeyer18complexity}.

Ikenmeyer et al.\ proved unconditional lower bounds on the complexity of
hazard-free circuits implementing explicit
functions~\cite{ikenmeyer18complexity}. More precisely, they show exponential
gaps between the size of several Boolean circuits and their hazard-free
counterparts. Furthermore, they show that hazard-free verification circuits for NP-hard
problems cannot be of polynomial size unless the circuit equivalent of
P\,$=$\,NP holds. On the other hand, there are efficient implementations
of sorting networks that avoid certain hazards~\cite{BLM-TC-19}, showing that
hazard-free implementations do not always come at a high cost. This leads to the following
question:
\begin{quote}
\emph{Which classes of Boolean functions allow for an efficient hazard-free
implementation?}
\end{quote}

\noindent
The key ingredient to the result from~\cite{BLM-TC-19} is a circuit
implementation of a finite-state transducer parsing the inputs, which leads to
a parallel prefix computation (PPC) task. Ladner and
Fischer~\cite{ladner1980parallel} presented a general framework providing an
efficient circuit implementation of arbitrary (small) transducers, giving rise
to the most efficient adder circuits known to date.
While the Ladner and Fischer framework fails to yield hazard-free circuits, the
result in~\cite{BLM-TC-19} suggests the possibility of a general hazard-free
construction. Despite being a somewhat specialized class of circuits, the fact
that addition can be phrased as a PPC problem renders hazard-free transducer
circuits a key stepping stone towards hazard-free arithmetics.

\subsection{Transducers}
A deterministic finite-state transducer is a finite-state machine that outputs
a symbol on each state transition. We phrase our results for \emph{Mealy
machines}~\cite{mealy1955method}, but our techniques are not specific to this
type of transducer.

\begin{definition}[Mealy Machine]
A Mealy machine $T=(S,s_0,\Sigma,\Lambda,t,o)$ is a $6$-tuple, where
\begin{enumerate}[(i)]
  \item $S$ is the finite set of states,
  \item $s_0\in S$ is the starting state,
  \item $\Sigma$ is the finite input alphabet,
  \item $\Lambda$ is the finite output alphabet,
  \item $t\colon S\times \Sigma \to S$ is the state transition function, and
  \item $o\colon S\times \Sigma \to \Lambda$ is the output function.
\end{enumerate}
\end{definition}

Each Mealy machine induces a \emph{transcription function} mapping a string of
input symbols to a string of output symbols of the same length.

\begin{definition}[Transcription Function $\tau$]
For $n\in \NN$ and Mealy machine $T=(S,s_0,\Sigma,\Lambda,t,o)$, the
\emph{transcription function $\tau_{T,n}\colon \Sigma^n \to \Lambda^n$}
is given in the following way. Define for
$i\in \{1,\ldots,n\}$ and $x \in \Sigma^n$ the \emph{state $s_i$ after
$i$ steps} inductively via $s_i\coloneqq t(s_{i-1},x_i)$. Then
$\tau_{T,n}(x)_i\coloneqq o(s_{i-1},x_i)$.
\end{definition}

Note that every Boolean function $f\colon \IB^n\rightarrow\IB^m$ can
(essentially) be realized by a deterministic finite-state transducer. A simple
implementation could read the entire input string $x$ and output $f(x)$ on
the reception of the last input symbol. This approach, however, requires an
exponential number of states $|S|\in\BO(2^n)$ to memorize the input.
Accordingly, it is of interest to consider \emph{small} transducers. In
particular, important basic operations, like addition, $\max$, and $\min$, can
be implemented by constant-size transducers.

\subsection{Our Contribution}

In this work, we establish that constant-size transducers allow for an efficient
hazard-free circuit implementation.
Denoting by $\ell$ and $m$ the (constant) number of bits encoding an input
symbol and an output symbol, respectively, by $n$
the length of the input string, by $|S|$ the number of states of the
transducer, and by $k$ an upper bound on the number of metastable bits in the
input, our main result is as follows.

\begin{restatable*}{theorem}{mainres}\label{thm:trans}
For any integers $k\in\NN$, $\ell,m,n\in \NN_{>0}$ (with $k\leq n$) and Mealy machine
$T=(S,s_0,\Sigma=\IB^{\ell},\Lambda\subseteq\IB^m,t,o)$, there is a $k$-bit hazard-free
circuit implementing $\tau_{T,n}$. For $\kappa \coloneqq \sum^{\min\{|S|, 2^k\}}_{i=0}\binom{|S|}{i}$
and $\lambda\coloneqq\min\{m, 2^{|S|\cdot|\Sigma|}\}$
the circuit has
\begin{align*}
\text{size}\quad &
    \BO\left((\kappa^3 + (2^{\ell}/\ell)\kappa^2 + 2^{\ell}\kappa\lambda)n\right)\\*
    \text{and depth}\quad &
    \BO\left(\log\kappa\log n + \ell\right)
    \,.
  \end{align*}
\end{restatable*}

We remark that the proof of Theorem~\ref{thm:trans}
shows that we can save a factor of $\kappa$ in the third term, provided that
the preimage of $1$ under $o(\cdot,\sigma)_j$ (i.e., bit $j$ of the output function with the
second input fixed to $\sigma$) has size at most $2^k$ for each $\sigma \in
\Sigma$ and $j\in[m]$.
In this case, there is a $k$-bit hazard-free circuit implementing $\tau_{T,n}$
of size $\BO\left((\kappa^3 + 2^{\ell}\kappa^2 + 2^{\ell}\lambda)n\right)$.

The asymptotic complexity depends on $k$, the upper bound on the numbers of $\mfu$'s.
For $k\in\NN$
we consider two cases: $2^k\geq |S|$ and $2^k<|S|$.
Let in both cases $\lambda\coloneqq\min\{m, 2^{|S|\cdot|\Sigma|}\}$.
If $2^k\geq |S|$, we apply the trivial bound
of $2^{|S|}$ for the sum over the binomial coefficients $\kappa$.
Note that in this case, trivially the preimage of $1$ under $o(\cdot,\sigma)$
has size at most $2^k$.
Hence, as discussed above, the factor of $\kappa$ in the third term of the
size bound can be removed.
This gives us the following size and depth bounds for a fully hazard-free
implementation.

\begin{corollary}\label{cor:general}
For any integers $\ell,n\in \NN$ and Mealy machine
$T=(S,s_0,\Sigma=\IB^{\ell},\Lambda,t,o)$, the transcription function $\tau_{T,n}$
can be implemented by a hazard-free circuit
\begin{align*}
  \text{of size}\quad &\BO((2^{3|S|}+2^{2|S|+\ell}/\ell+2^\ell\lambda)n)\\
  \text{and depth}\quad &\BO(|S|\log n + \ell)\,.
\end{align*}
\end{corollary}

We stress that this result stands out against the lower bound
from~\cite{ikenmeyer18complexity}, which proves an exponential dependence of the
circuit size on $n$, for any general construction of hazard-free circuits.
While the above theorem incurs exponential overheads in terms of the size
of the transducer, the dependence on $n$ is asymptotically optimal. Thus, for
constant-size transducers, we obtain asymptotically optimal hazard-free
implementations of their transcription functions, both concerning size and
depth. More generally, Theorem~\ref{thm:trans} shows that the task of
implementing transcription functions is fixed-parameter tractable concerning
$\max\{\ell,|S|\}$.

If $2^k<|S|$, the Binomial Theorem~\cite{graham1994concrete}
provides a stronger bound for $\kappa$, the sum over the binomial coefficients.
Note that the respective factor in the third term of the size bound can still be
removed if the output function satisfies the above requirement, but this does
not hold in general.
\begin{corollary}\label{cor:smallk}
Given integers $k,\ell,n\in \NN$ and Mealy machine
$T=(S,s_0,\Sigma=\IB^{\ell},\Lambda,t,o)$, such that $2^k<|S|$, the transcription function $\tau_{T,n}$
can be implemented by a $k$-bit hazard-free circuit
\begin{align*}
  \text{of size}\quad &\BO((|S|^{3\cdot2^k}+(2^\ell/\ell)|S|^{2\cdot2^{k}}+2^\ell|S|^{2^k}\lambda)n)\\
  \text{and depth}\quad &\BO(2^k\log(|S|)\log n +\ell)\,.
\end{align*}
\end{corollary}

The main insight underlying the proof of Theorem~\ref{thm:trans} is an
understanding of how the encoding of a piece of information (such as an input) affects the ability of the circuit
to keep track of this information.
Due to the ambiguity presented by
$\mfu$ signals, naive encodings may lose information crucial for determining a
stable output, which cannot be recovered later. We tackle this problem by introducing a
``universal'' encoding that explicitly stores for each $A\subseteq S$ (of size at most $2^k$) whether the
state machine is currently in \emph{some} state from $A$. This redundancy
is sufficient to eliminate $k$-bit hazards, yet is affordable when $|S|$ or $k$ are small.

\subsubsection*{Application}
The presented framework is primarily of theoretical interest, as it shows good asymptotic behavior for hazard-free circuits arising from transducers.
Certainly, tailored solutions offer better constants for specific transducer functions.
However, the construction presented in this work is general in the sense that any given input transducer
will yield a hazard-free circuit. 
Hence, the construction can be used to facilitate the synthesis of hazard-free circuits.
Synthesis is the engineering process where the input is a finite state transducer and the output is a combinational circuit.
The construction presented in this paper can be used as a fully automatic synthesis tool
that generates hazard-free circuits by design.

\subsubsection*{Organization of this article}
We discuss related work in Section~\ref{sec:related}. With the help of a toy example,
Section~\ref{sec:example} builds intuition and presents the key ideas needed to obtain
Theorem~\ref{thm:trans}.
That is, Section~\ref{subsec:ppc} explains why the Ladner and
Fischer framework fails, and Section~\ref{subsec:encoding} introduces the encoding
used to resolve the main shortcoming of their approach. We prove Theorem~\ref{thm:trans}
in Section~\ref{subsec:poutline}.
In later section we discuss at hand of detailed examples why the choice of input encoding is important
(Section~\ref{sec:inp_enc}) and the effects of bounding $k$ (Section~\ref{sec:example_sec}).
\section{Related Work}\label{sec:related}
\subsubsection*{Applications of Hazard-Free Circuits}
If timing constraints for accessing bistable elements, such as flip-flops or
latches, are violated, they may become metastable. That is, their output signal
exhibits an intermediate voltage between high ($1$) and low ($0$) for an unknown
amount of time before it resolves to either one of them. Downstream circuit
components may respond as if subjected to an input of $1$ or $0$, or produce an
intermediate output voltage themselves, where the responses of different
components may be in conflict.

By the late '70s, there was an intense
debate among hardware developers whether the problem of metastability can
be dealt with deterministically, by suitable design of
circuits~\cite{marino1977effect,pechoucek1976anomalous,stucki79,wormald77synchronizer}.
Marino~\cite{Mar81} proved by a topological argument that no circuit (with
non-constant output) can avoid, resolve, or detect metastability in all cases.
Ever since, the standard approach to evading metastability in applications
where timing constraints cannot be guaranteed have been synchronizers,
see, e.g.,~\cite[Chap.~2]{kinniment08synchronization}. Synchronizers
trade time for the decreased probability of ongoing metastability (and thus
resulting in errors).

As mentioned earlier, modeling propagation of metastability in a worst-case
fashion matches Stephen Cole Kleene's ``strong logic of
indeterminacy''~\cite[\S64]{kleene52meta}. Characterization of the complexity of
hazard-free circuits, thus, is of immediate relevance for avoiding the use of
synchronizers, eliminating both incurred delays and the (remaining) probability
of error due to deterministic guarantees.
A non-trivial example of this was given in \cite{BLM-TC-19}, where Gray code
inputs that may have some metastability are sorted deterministically, with
only constant-factor overheads compared to optimal sorting networks in Boolean
logic.

To our knowledge, the first hazard-free multiplexer was published by
Goto~\cite{goto49relay}, but remained unnoticed by the Western world for
decades. Huffman~\cite{huffman57design} provided the first
general construction of metastability-containing circuits.
Goto and Huffman make no mention of Kleene's logic, developing their terms.
Similarly, related work on
cybersecurity~\cite{hu12complexity,tiwari09flow} appears to come
up with the concept independently again.
See~\cite{brzozowski01algebras} for a survey covering some of these articles and
discussing different logics.
Further applications where discussed in \cite{ikenmeyer18complexity}.
In our opinion, all of this goes to show that the questions we study in this
paper are fundamental and of widespread interest.

\subsubsection*{Complexity of Hazard-free Circuits}
In~\cite{ikenmeyer18complexity} it was shown that for monotone functions, their
\emph{hazard-free complexity} (i.e., the size of the smallest hazard-free
implementation) equals their monotone complexity (i.e., the size of the smallest
implementation without negation gates). This yields several unconditional
lower bounds as corollaries of results on monotone circuits.
In particular, an exponential separation between hazard-free and standard
circuit complexity follows from~\cite{alon87monotone,tardos88gap}, and the naive
monotone circuit of cubic size for Boolean matrix multiplication is
optimal~\cite{mehlhorn76monotone,paterson75complexity}. These lower bounds where
complemented by a general construction yielding circuits of size $n^{\BO(k)}|C|$
without $k$-bit hazards, where $C$ is an arbitrary circuit implementing the desired
function. Thus, for constant $k$, the overhead for removing $k$-bit hazards is
polynomial in $n$. The above separation result implies that an overhead of
$2^{k^{\Omega(1)}}$ is necessary, but it remains open whether the task is
fixed-parameter tractable w.r.t.\ $k$.

Jukna~\cite{jukna2021notes} strengthened the results from
Ikenmeyer et al.~\cite{ikenmeyer18complexity} on the gap between unconstrained
and hazard-free circuit complexity by proving that every minimal 
hazard-free circuit for a monotone Boolean function
must be monotone.
Moreover, Jukna showed that in general any Boolean function
$f\colon\IB^n\rightarrow\IB$ can be implemented by a hazard-free circuit of size
$\BO(2^n/n)$. We remark that applying this to the transcription function
$\tau_{T,n}\colon\IB^{\ell n}\rightarrow\IB^{m n}$ results in a circuit of size
$\BO((2^{\ell n}/(\ell n))m n)=\BO((2^{\ell n}/\ell)m)$, which is much larger than the circuit presented in this
work.

In contrast, some functions and specific hazards admit much more efficient
solutions, e.g.\ the optimal sorting networks in \cite{BLM-TC-19}.
If the possible positions of $\mfu$ inputs are restricted to the index
set $I$, a construction based on hazard-free multiplexers avoids the respective
hazards with a circuit of size
$\BO(2^{|I|}|C|)$~\cite[Lemma~5.2]{ikenmeyer18complexity}, where $C$ is as
above. This can be seen as combining speculative
computing~\cite{tarawneh12hiding,tarawneh14eliminating} with hazard-free multiplexers.

Furthermore, we remark that the lower bound can be circumvented by using non-standard
non-combinational logic~\cite{friedrichs18containing}. Using clocked circuits
and so-called masking registers, $k$-bit hazards can be eliminated with factor
$\BO(k)$-overhead. Masking registers also strictly increase the computational
power of the system with each clock cycle. However, in this paper, we consider
combinational logic only.

\subsubsection*{Transducers}
Our approach can be seen as an extension of the work of Ladner and
Fischer~\cite{ladner1980parallel}. This celebrated result yields the only
asymptotically optimal adder constructions known to date,
cf.~\cite{swartzlander15arithmetic}. Alongside the result for binary addition,
the authors pointed out the general applicability of their parallel prefix computation
(PPC) framework: for any transcription function, it allows constructing a
circuit implementing it. However, as we discuss in detail in
Section~\ref{sec:example}, their approach cannot be applied to our setting, as
it does not take into account the uncertainty imposed by unstable inputs.

Our approach might also remind the reader of the power set
construction~\cite[Thm.~1.39]{sipser2012introduction}, which translates a
non-deterministic finite-state automation into a deterministic one operating on
the power set of the state space. This analogy is correct to the extent that we
seek to maintain information on the set of states that are reachable by
resolutions of the input. However, Kleene logic has the fundamentally different
characteristic that the choice of encoding (e.g.\ of states) affects to what extent the circuit
can keep track of the encoded information. In a nutshell, we prove that it is
sufficient to maintain a bit vector indicating for each element $A\subseteq
S$ of the power set whether, given the input, all states that could have
been reached by the state machine are a subset of $A$. This resolves an issue
that has no connection to the original power set construction.
\section{Extending the PPC Framework to Hazard-free Circuits}\label{sec:example}

In this section, we walk the reader through the main ideas underlying our
framework, at the hand of a simple running example, and then prove our main result.
In Section~\ref{subsec:ppc} we demonstrate that a naive application of the
parallel prefix framework~\cite{ladner1980parallel} results in circuits that are
not hazard-free. We then use the running example to illustrate how to overcome
this hurdle and to obtain a hazard-free circuit by making use of the universal
encoding introduced in Section~\ref{subsec:encoding}. Finally, we prove
Theorem~\ref{thm:trans} in Section~\ref{subsec:poutline}.

The running example we use throughout this section is an extremely simple
transducer: it simply shifts the input sequence by one bit, outputting a $0$ on
the reception of the first symbol; see Figure~\ref{fig:ex_trans} for an illustration.
It has two states (referred to as
$0$ and $1$), which are used to keep track of the most recently processed input
bit. Hence, the transition and output functions are obvious: the automaton
transitions to state $s\in\{0,1\}$ on the reception of input $s$, and outputs $s$
when leaving the state $s$. Thus, the transducer is formally specified by the
6-tuple ($S\coloneqq\{0,1\}$, $s_0\coloneqq 0$, $\Sigma\coloneqq\{0,1\}$,
$\Lambda\coloneqq\{0,1\}$, $t(s,i)=i$, $o(s,i)=s$).

This transducer is a toy example, and it is pointless to construct a
circuit implementing its transcription function -- this is easily achieved by
suitable rewiring the inputs instead. However, the shift transducer serves as a
minimal example for illustrating both the obstacle we need to overcome and the
general solution we provide for doing so.

\begin{figure}
  \centering 
  \includegraphics{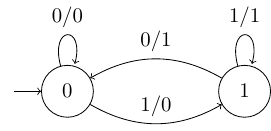}
  \caption{The shift transducer delays the input by one symbol. It serves as a
  running example.}
  \label{fig:ex_trans}
\end{figure}

\subsection{The Classic PPC Framework}\label{subsec:ppc}

In their work, Ladner and Fischer observe that any transcription function $\tau_{T,n}$ on
inputs $x\in \IB^n$ can be efficiently implemented as a circuit by following
four steps. Given an encoding of functions $S\to S$,
each step takes the output of the previous step as input.
For each $i\in\{1,\hdots, n\}$:\vspace{.2em}
\begin{tabularx}{\columnwidth}{cX}
  (Step~1) & compute the encoding of $t_{x_i}$,
    where $t_\sigma$ is the restricted transition function for symbol $\sigma\in\Sigma$, where
    $t_\sigma\coloneqq t(\cdot,\sigma)\colon S\rightarrow S$,\\
  (Step~2) & compute the composition
    $\pi_{i}\coloneqq t_{x_{i}} \circ \hdots \circ t_{x_1}$
    of restricted transition functions,\\
  (Step~3) & compute the $i$-th state $s_i=\pi_i(s_0)$, i.e., evaluate $\pi_i$ at $s_0$, and\\
  (Step~4) & compute the $i$-th output $o(s_{i-1}, x_i)=\tau_{T,n}(x)_i$.
\end{tabularx}\smallskip

\noindent
Steps $1$, $3$, and $4$ can be performed independently and hence in parallel for
each $i$, based on the output of previous steps. This means that each of them can
be performed by $n$ copies of a circuit whose size (and thus depth) depends only
on the transducer. In contrast, Step~$2$, the computation of all prefixes,
inherently relies on information across all $i$'s. To achieve small depth
without blowing up the circuit size, Ladner and Fischer exploit the
associativity of function composition.

For a constant-size Mealy machine, Steps $1$, $3$, and $4$ can be performed by
circuits of size $\BO(n)$ and depth $\BO(1)$, and Step~$2$ can be done by a
circuit of size $\BO(n)$ and depth $\BO(\log n)$. In their argument showing
this, Ladner and Fischer encode the space of functions $S\to S$ as Boolean
$|S|\times|S|$ matrices.\footnote{We remark that it would be more efficient to
encode these functions by listing their values, reducing the size of the
encoding from $|S|^2$ to $|S|\lceil\log |S|\rceil$. However, for $|S|\in \BO(1)$
this does not affect the asymptotics.} Functional composition (Step~$2$), hence,
becomes Boolean matrix multiplication, while evaluation of functions (Step~$3$)
becomes Boolean matrix-vector multiplication.

Applying this to our example, states $0$ and $1$ of our transducer are
represented by the column unit vector $e^{(0)}=\binom{1}{0}$ and $e^{(1)}=\binom{0}{1}$,
respectively. Representing $t_{\sigma}\colon S\to S$ as a Boolean matrix in a natural way,
the column corresponding to state $s$ is the unit vector $e^{(t(s,\sigma))}$. Assuming that we make the best
effort and use a hazard-free circuit for computing the encodings of $t_{\sigma}$,
the hazard-free extension determines what our circuit will compute when
receiving $\mfu$ as an input symbol. Denote by $\mathcal{M}_{t_\sigma}$ the matrix
computed by this hazard-free circuit for (the encoding of) the transition function restricted to the input symbol
$\sigma$. We thus obtain
\begin{align*}
  &\mathcal{M}_{t_0}=
  \begin{pmatrix}
    1 & 1 \\
    0 & 0
  \end{pmatrix},\quad
  \mathcal{M}_{t_1}=
  \begin{pmatrix}
    0 & 0 \\
    1 & 1
  \end{pmatrix},\\
  &\mathcal{M}_{t_{\mfu}}=
  \begin{pmatrix}
    1 & 1 \\
    0 & 0
  \end{pmatrix}\bigstarr
  \begin{pmatrix}
    0 & 0 \\
    1 & 1
  \end{pmatrix}=
  \begin{pmatrix}
    \mfu & \mfu \\
    \mfu & \mfu
  \end{pmatrix},
\end{align*}
where the $*$ operator is applied component-wise: as each entry of the computed
matrix depends on whether the input symbol was $0$ or $1$, $\mfu$ must result
in the all-$\mfu$ matrix.

Function composition corresponds to Boolean matrix multiplication, i.e., for
restricted functions $t_\sigma$ and $t_{\sigma'}$ ($\sigma,\sigma'\in\Sigma$),
$\mathcal{M}_{t_\sigma \circ t_{\sigma'}} = \mathcal{M}_{t_\sigma} \cdot \mathcal{M}_{t_{\sigma'}}$, where $\cdot$ denotes the
Boolean matrix multiplication operator. Similarly, function evaluation
corresponds to matrix-vector multiplication, meaning that the framework
stipulates to compute the encoding of $\pi_i$ as $\mathcal{M}_{t_{x_{i}}}\cdot
\ldots \cdot \mathcal{M}_{t_{x_1}}$ and hence $s_i$ as $\mathcal{M}_{t_{x_{i}}}\cdot
\ldots \cdot \mathcal{M}_{t_{x_1}}\cdot e^{(s_0)}$.
Again, we make the best effort, i.e., assume that hazard-free circuits are used.
Therefore, the circuit will compute $\mathcal{M}_{t_{x_{i-1}}}\cdot_{\mfu} \ldots
\cdot_{\mfu} \mathcal{M}_{t_{x_1}}\cdot_{\mfu} e^{(s_0)}$.

Finally, the $i$-th output bit is computed according to the output function
by mapping $e^{(0)}$ to output $0$ and
$e^{(1)}$ to output $1$. Note that the redundant representation allows for some freedom:
we can choose for $\binom{0}{0}$ and $\binom{1}{1}$ whether to map them to $0$
or $1$, respectively. As these state vectors cannot occur anyway, this choice
has no impact on stable inputs. It might, however, affect the behavior of a
(hazard-free) circuit confronted with unstable inputs.

{\renewcommand*{\arraystretch}{1.2}
\begin{table*}\caption{Application of the Ladner and Fischer approach to the example transducer
  for two different input strings. Top: stable input word $0010$, Bottom: unstable
  input word $0\mfu10$. Gray area: these values do not match the results from a
  hazard-free computation.}
\label{tab:examplerun}
\centering
\begin{tabular}{|c|c|c c c c c|}
    \hline
     & $i$ & 0 & 1 & 2 & 3 & 4 \\ \hline
    input & $x_i$ & - & $0$ & $0$ & $1$ & $0$ \\ \hline
    Step~1, function encoding & $\mathcal{M}_{t_{x_i}}$ & - & $\mathcal{M}_{t_0}$ & $\mathcal{M}_{t_0}$ &
    $\mathcal{M}_{t_1}$ & $\mathcal{M}_{t_0}$\\ \hline
    Step~2, function composition & $\mathcal{M}_{\pi_{i}}$ & - & $\mathcal{M}_{t_0}$ & $\mathcal{M}_{t_0}$ &
    $\mathcal{M}_{t_1}$ & $\mathcal{M}_{t_0}$\\ \hline
    Step~3, function evaluation & $s_{i}=\pi_{i}(s_0)$ & $e^{(0)}$ & $e^{(0)}$ & $e^{(0)}$ & $e^{(1)}$ & $e^{(0)}$ \\ \hline
    Step~4, output & $o(s_{i-1}, x_i)$ & - & $0$ & $0$ & $0$ & $1$ \\
    \hline
  \end{tabular}
  \\\vspace{.2cm}
  \begin{tabular}{|c|c|c c c c c|}
    \hline
    & $i$ & 0 & 1 & 2 & 3 & 4 \\ \hline
    input & $x_i$ & - & $0$ & $\mfu$ & $1$ & $0$ \\ \hline
    Step~1, function encoding & $\mathcal{M}_{t_{x_i}}$ & - & $\mathcal{M}_{t_0}$ & $\mathcal{M}_{t_{\mfu}}$ &
    $\mathcal{M}_{t_1}$ & $\mathcal{M}_{t_0}$\\ \hline
    Step~2, function composition & $\mathcal{M}_{\pi_{i}}$ & - & $\mathcal{M}_{t_0}$ & $\mathcal{M}_{t_{\mfu}}$ &
    \cellcolor{Gray} $\mathcal{M}_{t_\mfu}$ & \cellcolor{Gray} $\mathcal{M}_{t_\mfu}$\\ \hline
    Step~3, function evaluation & $s_{i}=\pi_{i}(s_0)$ & $e^{(0)}$ & $e^{(0)}$ & $\binom{\mfu}{\mfu}$ & \cellcolor{Gray}
    $\binom{0}{\mfu}$ & \cellcolor{Gray}$\binom{0}{\mfu}$\\ \hline
    Step~4, output & $o(s_{i-1}, x_i)$ & - & $0$ & $0$ & $\mfu$ & \cellcolor{Gray} $\mfu$ \\
    \hline
  \end{tabular}
\end{table*}}

Now consider Table~\ref{tab:examplerun}, which breaks down the computation for
two input strings, first the stable input $0010$ and then an input containing a
single unstable bit, $0\mfu10$. A hazard-free circuit should output $0001$ in
the first case and $00\mfu1$ in the latter case.
However, multiplication of any function encoding with the all-$\mfu$ matrix results
again in the all-$\mfu$ matrix, such that any further step of function composition
will return $\mathcal{M}_{t_{\mfu}}$. Hence, for the second input, the hazard-free
extension of the established approach will compute $\mfu$ as the last symbol.

To identify the key issue, examine the sequence of matrices determined from the
input symbols, which represent the transition functions restricted to the
respective input bit. $\mathcal{M}_{t_0}$ will map any vector corresponding to a
stable state, i.e., each unit vector, to $e^{(0)}$. This reflects the fact that a $0$
is guaranteed to result in state $0$. Accordingly, multiplying
$\mathcal{M}_{t_0}$ with any matrix representing the transition function
restricted to a stable input symbol will result in $\mathcal{M}_{t_0}$: no
matter what happened to the state machine before, the state after receiving
input symbol $0$ is $0$ (represented by $e^{(0)}$).

On the other hand, $\mathcal{M}_{t_{\mfu}}$ is the ``correct'' representation
for input symbol $\mfu$: regardless of the previous state, the resolutions
$0$ and $1$ of input symbol $\mfu$ reach state $0$ or $1$ respectively, and
$\binom{1}{0}*\binom{0}{1}=\binom{\mfu}{\mfu}$. Unfortunately,
(the hazard-free extension of) Boolean matrix multiplication of any matrix with
the all-$\mfu$ matrix $\mathcal{M}_{t_{\mfu}}$ can never yield a matrix that is
composed of column unit vectors.
The circuit computes
\begin{align*}
  &\mathcal{M}_{t_1} \cdot_{\mfu} \mathcal{M}_{t_{\mfu}} \cdot_{\mfu}
  \mathcal{M}_{t_0}\cdot_{\mfu}e^{(s_0)} \\&=
  \begin{pmatrix}
    0 & 0 \\
    1 & 1
  \end{pmatrix}
  \cdot_{\mfu}
  \begin{pmatrix}
    \mfu & \mfu \\
    \mfu & \mfu
  \end{pmatrix}
  \cdot_{\mfu}
  \begin{pmatrix}
    1 & 1 \\
    0 & 0
  \end{pmatrix}
  \cdot_{\mfu}
  \begin{pmatrix}
    1\\
    0
  \end{pmatrix}\\
  &=
  \begin{pmatrix}
    0 & 0 \\
    \mfu & \mfu
  \end{pmatrix}
  \cdot_{\mfu}
  \begin{pmatrix}
    1 \\
    0
  \end{pmatrix} = \begin{pmatrix}
    0 \\
    \mfu
  \end{pmatrix}
\end{align*}
as the (encoding of) state $s_3$ in Step~3.
Thus, in Step~4 the circuit at its best effort can only output
\begin{align*}
o_{\mfu}\left(\binom{0}{\mfu},0\right)&=
o\left(\binom{0}{0},0\right)*o\left(\binom{0}{1},0\right)\\
&=o\left(\binom{0}{0},0\right)*1
=\begin{cases}
1 & \mbox{if }o\left(\binom{0}{0},0\right)=1\\
\mfu & \mbox{if }o\left(\binom{0}{0},0\right)=0.
\end{cases}
\end{align*}
This might give the false hope of escaping the problem by leveraging our
aforementioned freedom to choose $o\left(\binom{0}{0},0\right)$ by setting it to
$1$, but this is a red herring. If $o\left(\binom{0}{0},0\right)=1$, then the
input string $0\mfu 00$ forces the circuit to incorrectly output
$o_{\mfu}\left(\binom{\mfu}{0},0\right)=o\left(\binom{0}{0},0\right)*0=\mfu$ as
the final bit.

Intuitively, the main takeaway from this example is that encoding the
transition function as an $|S|\times|S|$ matrix is insufficient to keep track of
the set of reachable states. The crucial problem is that uncertainty about the
transducer's state can be removed (or reduced) by later input symbols.
In our example, we have a very simple case: any stable input symbol fully
determines the attained state, regardless of the previous state.

In our approach, we keep track of a strict subset of states $A\subset S$
the state machine could have reached when facing some inputs with some uncertainty,
such that we can infer the set of states $B\subset S$
(ideally, a singleton if the uncertainty has been completely masked) that can be
reached by the current state transition.

\subsection{Suitable Encoding of Transition Functions}\label{subsec:encoding}
Before formalizing our encoding, we provide some intuition, by using, again, our
toy example, the shift transducer. As we kept the example tiny, the number of
subsets of the state space, i.e., the power set of $S$, is small: the four possible
subsets are $\emptyset$, $\{0\}$, $\{1\}$ and $\{0,1\}$.
Already a single $\mfu$ input leads to the largest possible
uncertainty about the state of the transducer, hence we choose $k=1$ throughout the example.

\subsubsection*{Intuition}
To avoid the pitfall discussed in Section~\ref{subsec:ppc}, we now choose a
highly redundant matrix representation. Fix an input symbol $\sigma \in
\{0,1\}$. The corresponding $2^{|S|}\times 2^{|S|}$ Boolean matrix encodes for
each pair of sets $A,B\subseteq S$ whether for each state in $A$ receiving
$\sigma$ as the next input symbol will result in a state from $B$. Again, assuming
that a hazard-free circuit is used to compute the matrix representation, this
choice fully determines the matrix
$\mathcal{M}_{t_{\mfu}}=\mathcal{M}_{t_0}*\mathcal{M}_{t_1}$ resulting from
input symbol $\mfu$. Labeling the rows by subsets $B$ and columns by the subsets $A$,
the resulting matrices $\mathcal{M}_{t_0}$, $\mathcal{M}_{t_1}$,
and $\mathcal{M}_{t_{\mfu}}$ are
\begin{align*}
&\begin{blockarray}{ccccc}
    & \emptyset & \{0\} & \{1\} & \{0,1\}\\
    \begin{block}{c(cccc)}
      \emptyset & 1 & 0 & 0 & 0 \\
      \{0\}   & 1 & 1 & 1 & 1 \\
      \{1\}   & 1 & 0 & 0 & 0 \\
      \{0,1\} & 1 & 1 & 1 & 1 \\
    \end{block}
  \end{blockarray}~,~~\\
 & \begin{blockarray}{ccccc}
    & \emptyset & \{0\} & \{1\} & \{0,1\}\\
    \begin{block}{c(cccc)}
      \emptyset & 1 & 0 & 0 & 0 \\
      \{0\}   & 1 & 0 & 0 & 0 \\
      \{1\}   & 1 & 1 & 1 & 1 \\
      \{0,1\} & 1 & 1 & 1 & 1 \\
    \end{block}
  \end{blockarray}~,\mbox{ and}~~\\
  &\begin{blockarray}{ccccc}
    & \emptyset & \{0\} & \{1\} & \{0,1\}\\
    \begin{block}{c(cccc)}
      \emptyset & 1 & 0 & 0 & 0 \\
      \{0\}   & 1 & \mfu & \mfu & \mfu \\
      \{1\}   & 1 & \mfu & \mfu & \mfu \\
      \{0,1\} & 1 & 1 & 1 & 1 \\
    \end{block}
  \end{blockarray}~,
\end{align*}
respectively.
Consider, for example, input symbol $0$ and the corresponding matrix $\mathcal{M}_{t_0}$.
Each set of states $\{0\}$,
$\{1\}$ and $\{0,1\}$ will transition to state $0$. As $0$ is a subset of $\{0\}$ and $\{0,1\}$
the respective column vectors of the matrix are $\begin{pmatrix}0 & 1 & 0 & 1\end{pmatrix}^{\intercal}$.
Note that each matrix maintains the
trivialities that the empty set will always be mapped to a subset of any set
(leftmost column), no non-empty set is mapped to a subset of the empty set (top
row), and any set will be mapped to a subset of $S=\{0,1\}$ (bottom
row).\footnote{Note that the submatrices induced by the rows and columns of
singleton sets equal those we got in Section~\ref{subsec:ppc}. Since we opted
for a minimal example with only two states, none of the additional entries
depend on the specific transition function. Note that this changes for $|S|>2$.}
Crucial to us is the point that the encoding now ``takes note'' of the fact that
even when an input symbol is $\mfu$, it remains certain that for any resolution
of the input the transducer must end up in \emph{some} state, reflected by the
bottom row of $\mathcal{M}_{t_{\mfu}}$.

\subsubsection*{Application to the example}
Applying the framework of Section~\ref{subsec:ppc} with the new encoding to the
input string $0\mfu10$, this time Step~2 yields for $\pi_{3}$:
\begin{align*}
  &\mathcal{M}_{t_1} \cdot_{\mfu} \mathcal{M}_{t_{\mfu}} \cdot_{\mfu}
  \mathcal{M}_{t_0}\\
  &=
  \begin{pmatrix}
    1 & 0 & 0 & 0 \\
    1 & 0 & 0 & 0 \\
    1 & 1 & 1 & 1 \\
    1 & 1 & 1 & 1
  \end{pmatrix}
  \cdot_{\mfu}
  \begin{pmatrix}
    1 & 0 & 0 & 0 \\
    1 & \mfu & \mfu & \mfu \\
    1 & \mfu & \mfu & \mfu \\
    1 & 1 & 1 & 1
  \end{pmatrix}
  \cdot_{\mfu}
  \begin{pmatrix}
    1 & 0 & 0 & 0 \\
    1 & 1 & 1 & 1 \\
    1 & 0 & 0 & 0 \\
    1 & 1 & 1 & 1
  \end{pmatrix}\\
  &=
  \begin{pmatrix}
    1 & 0 & 0 & 0 \\
    1 & 0 & 0 & 0 \\
    1 & 1 & 1 & 1 \\
    1 & 1 & 1 & 1
  \end{pmatrix}
  \cdot_{\mfu}
  \begin{pmatrix}
    1 & 0 & 0 & 0 \\
    1 & \mfu & \mfu & \mfu \\
    1 & \mfu & \mfu & \mfu \\
    1 & 1 & 1 & 1
  \end{pmatrix}
  =\mathcal{M}_{t_1}.
\end{align*}
As we can see, multiplying with $\mathcal{M}_{t_1}$ from the left now correctly
recovers $\mathcal{M}_{t_1}$, i.e., regardless of previous possibly unstable
input symbols, the computed matrix reflects that reading input symbol $1$
results in state $1$. The above behavior is not a corner case due to the small size of our example,
a larger example is given in Section~\ref{sec:example_sec}.

A fundamental problem in hazard-free circuits is that the resolution
of the superposition may add undesired values (Observation~\ref{obs:resstar}).
Recall that for a Boolean function $f\colon \IB^n \to \IB^m$, we obtain         $f_{\mfu}(x)$ by mapping each $y\in \res(x)$ using $f$ and then taking the
$\bigstarr$ operation over the resulting set. The latter might, depending on $x$
and the encoding, lose information, as $\res(f_{\mfu}(x))$ might be a strict
superset of $f(\res(x))$. This becomes problematic when we subsequently apply
some function, e.g., $g:\IB^m \to \IB$, that is constant on $f(\res(x))$, but not on
$\res(f_{\mfu}(x))$ for some $x\in \IT^n$; we then get that
\begin{align*}
\mfu&=g_{\mfu}(f_{\mfu}(x))=\bigstarr(g(\res(f_{\mfu}(x))))\\
&\neq\bigstarr(g(f(\res(x)))=(g\circ f)_{\mfu}(x)\in \IB\,.
\end{align*}

\subsubsection*{A universal encoding for functions}
The key idea underlying our solution to this
problem is to maintain the information that $f$ maps $\res(x)$ to $f(\res(x))$.
As illustrated by the example, the encoding stores for each $A\subseteq \IB^n$ and $B\subseteq \IB^m$
whether $f(A)\subseteq B$. When composing functions, we then can retrieve
the information that $g\circ f$ is constant on $\res(x)$.

The size of the encoding can be reduced if we only regard $k$-bit hazards.
If the number of $\mfu$'s in the input $x$ can be bounded by an integer $k$, then
there is also an upper bound on $|f(\res(x))|$. As each $\mfu$ has
two stable resolutions, we can readily bound $|f(\res(x))|\le |\res(x)|\le 2^k$.
Hence, the encoding can be reduced to sets $A\subseteq \IB^n$ and $B\subseteq \IB^m$,
where $|A|\leq2^k$ and $|B|\leq2^k$.

This leads to the
following encoding, which is universal in the sense that it gives rise to
$k$-bit hazard-free implementations of arbitrary transducers.

\begin{definition}[Universal Function Encoding]\label{def:universal}
Denote by $\mathcal{P}_{t}(A)$ the set of all subsets of $A$ with cardinality
smaller equal to $t\in\{0,\ldots,|A|\}$, i.e., $\mathcal{P}_{t}(A)\coloneqq\{A'\subseteq
A\,|\,|A'|\leq t\}$.
Given a function $f\colon S\to T$ and $k\in\NN$, define
\begin{equation*}
\forall A\in\encset{S}, B\in\encset{T}\colon \M{f}_{BA}\coloneqq \begin{cases}
1 & \mbox{if }f(A)\subseteq B\\
0 & \mbox{else.}
\end{cases}
\end{equation*}
Thus, the Boolean matrix $\mathcal{M}_{f}$ has dimension
\[|\encset{T}|\times|\encset{S}|=
\sum^{\min\{|T|, 2^k\}}_{i=0}\begin{pmatrix}{|T|}\\i\end{pmatrix}\times
\sum^{\min\{|S|, 2^k\}}_{i=0}\begin{pmatrix}{|S|}\\i\end{pmatrix}.\]
Moreover,
for $s\in S$ and $A\in\encset{S}$, define $e^{(s)}$ via $e_A^{(s)}\coloneqq 1$ if
$s\in A$ and $e_A^{(s)}\coloneqq 0$ otherwise. Hence, for all $B\in\encset{T}$ we have
that $\left(\mathcal{M}_{f}\cdot e^{(s)}\right)_B=1$ if $f(s)\in
B$ and $\left(\mathcal{M}_{f}\cdot e^{(s)}\right)_B=0$ otherwise.
\end{definition}

We remark that we are mostly interested in the case where $T=S$,
since for the restricted transition functions computed in Step~1 we only need to
represent functions from $S$ to $S$.

\subsection{Proving the Main Result}\label{subsec:poutline}
Our goal in this subsection is to show Theorem~\ref{thm:trans}. To this end, we
first establish that the above encoding indeed keeps track of all required
information to remove uncertainty in case the input allows it.
We show that the above matrix representation is a suitable
encoding, i.e., we show that the representation is capable of encoding the transition
function without dropping information, here the transition function is
restricted to a single input symbol.

\subsubsection*{The PPC framework of Ladner and Fischer}
Recall that the PPC framework computes states $s_i$ by
(Step~1) translating input symbol $x_i$ into the matrix representation of $t_{x_i}$,
(Step~2) determining by matrix multiplication the transition function $\pi_i$ resulting
from a sequence of input symbols, and
(Step~3) evaluating the transition function $\pi_i$ on $e^{(s_0)}$ via matrix-vector multiplication.
Given state $s_i$ the output at position $i$ can be determined by application of $o_{x_i}$ (Step~4).
In the PPC framework, we replace the original matrix representation
with the new universal function encoding.
We show that the universal encoding overcomes the issue of information
loss during function composition.
If function composition does not lose information, then repeated application of
function composition gives hazard-free transition functions $\pi_i$,
which will be formalized in Corollary~\ref{cor:extension_commutes}.

\subsubsection*{Main stepping stone}
To show that for
the universal encoding the strategy also succeeds in the face of unstable inputs,
we need to prove that composing functions and translating the composed function
into its matrix representation is equivalent to first translating each function
to its matrix representation and then multiplying these matrices. This is
captured by the following theorem, which is our main stepping stone towards
Theorem~\ref{thm:trans}.

\begin{theorem}\label{thm:extension_commutes}
Let $k\in\NN$, $f_j\colon S\to T$ for all $j\in J$, $g_i\colon T\to U$ for all $i\in
I$, $A\in\encset{S}$, and $C\in\encset{U}$. If $|J|\cdot|A|\leq2^k$, then
\begin{equation*}
\left(\left(\bigstarr_{i\in I} \mathcal{M}_{g_i}\right)\cdot_{\mfu}
\left(\bigstarr_{j\in J} \mathcal{M}_{f_j}\right)\right)_{CA}
=\left(\bigstarr_{(i,j)\in I\times J} \mathcal{M}_{g_i\circ
f_j}\right)_{CA}.
\end{equation*}
\end{theorem}

The condition $|J|\cdot|A|\leq2^k$ may seem non-intuitive at first.
The product $|J|\cdot|A|$ corresponds to the number of resolutions of the input that
has been processed so far.
The product is bounded by $2^k$, i.e., the number of resolutions of $k$ many $\mfu$'s.
In the application of Theorem~\ref{thm:extension_commutes}, set $J$ corresponds
to the resolutions of respective parts of the input.
Set $A$ corresponds to the current state of the transducer, and its size depends
on the uncertainty of previous transitions.
More intuition is provided in the example in Section~\ref{sec:example_sec}.

Before we prove the key stepping stone we discuss tools that are used in
the proof of the main result and the key stepping stone. First, we define hazard-free
multiplexers which are used in Step~$4$ of the PPC framework.
Second, we show that there is an efficient implementation of hazard-free matrix
multiplication. Third, we introduce monotone resolutions, a technique used in the
proofs. Last, we state a recent result on the complexity
of hazard-free circuits for general functions, which is applied in the proof of
the main theorem.

\subsubsection*{Hazard-free multiplexer}
For later use, we define the $\ell$-input multiplexer $\MUX_\ell$. A multiplexer
is a circuit that selects one of its inputs according to a dedicated select
input. The select input encodes index $i\in[2^\ell]$, where $[t]=\{0,\ldots,t-1\}$, for $t\in\NN_{>0}$.

\begin{definition}\label{def:cmux}
  Let $\ell,b\in\NN_{>0}$. An $\ell$-input multiplexer $\MUX_\ell$ receives inputs
  $x_i\in\IB^b$ for $i\in[2^\ell]$, and select input $s\in\IB^\ell$.
  Let $\langle \cdot \rangle\colon\IB^\ell\rightarrow\NN$ be the standard binary decoding function.
  Interpreting the select input $s$ as an index,
  $\MUX_\ell$ outputs $x_{\langle s\rangle}$, i.e., $\MUX_\ell(x_0,\ldots,x_{2^\ell-1},s)\coloneqq
  x_{\langle s\rangle}$.
\end{definition}

Hazard-free multiplexers are an efficient tool for selecting from a set of functions
encoded by the universal function encoding.
For ease of readability, we name these multiplexers $\MMUX$.
Given select input $s\in \IB^\ell$ and a set of functions $\{f_j:S\rightarrow T | j\in \IB^\ell\}$,
interpret $j$ as binary numbers, set all $\mathcal{M}_{f_j}$ as the multiplexer
inputs, and $s$ as the select input. Then $\MMUX$ selects the universal encoding
of $f_j$, where $\langle j\rangle=\langle s\rangle$;
\begin{align*}
  \MMUX_\ell(\{f_j|j\in \IB^\ell\},s)\coloneqq
\mathcal{M}_{f_{s}}\,.
\end{align*}

The $\ell$-input multiplexer
can be implemented by a hazard-free circuit~\cite{ikenmeyer18complexity}.

\begin{corollary}[of~{\cite[Lemma 5.1]{ikenmeyer18complexity}}]\label{col:cmux}
  Let $\ell,b\in\NN_{>0}$, there is a hazard-free implementation of $\MUX_\ell$
  computing $\left(\MUX_\ell\right)_\mfu(x_0,\ldots,x_{2^\ell-1},s)$ with $x_i\in\IB^b$
  for $i\in[2^\ell]$ and $s\in\IB^\ell$.
  The implementation has
  \begin{align*}
    \text{size}&\quad \BO\left(2^\ell b\right)\\
    \text{and depth}&\quad\BO\left(\ell\right)\,.
  \end{align*}
\end{corollary}
\begin{IEEEproof}
  We only sketch the proof as this is a standard construction.
  The hazard-free implementation of a multiplexer receiving two input bits
  and a single select bit is given in~\cite{ikenmeyer18complexity}; it has constant
  size and depth.
  The multiplexer can be extended to $\ell$ select bits by building a tree of
  multiplexers, where each layer is controlled by a bit of the select input.
  Extension to inputs of width $b$ is simply done by copying the tree of multiplexers
  $b$ times.
\end{IEEEproof}

\subsubsection*{Hazard-free matrix multiplication}
For Theorem~\ref{thm:extension_commutes} to be of use, we need a circuit
implementing $\cdot_{\mfu}$, i.e., hazard-free matrix multiplication. The
standard Boolean matrix multiplication algorithm is known to be appropriate.
\begin{corollary}[of~{\cite[Lemma~4.2]{ikenmeyer18complexity}}]\label{cor:monotone}
There is a circuit of size $(2\beta-1)\alpha\gamma$ and depth
$\lceil\log \beta\rceil+1$ that computes $\mathcal{A}\cdot_{\mfu}\mathcal{B}$ for matrices $\mathcal{A}\in
\IT^{\alpha\times \beta}$ and $\mathcal{B}\in \IT^{\beta \times \gamma}$.
\end{corollary}
\begin{IEEEproof}
The standard algorithm for Boolean matrix multiplication is monotone, i.e., does
not use negations, and requires for each of the $\alpha \gamma$ entries of
$\mathcal{A}\cdot_{\mfu}\mathcal{B}$ a binary tree of $\beta$ $\AND$ gates (the leaves) and
$\beta-1$ $\OR$ gates (internal nodes); monotone circuits are hazard-free.
\end{IEEEproof}

We observe that hazard-free Boolean matrix multiplication is associative.

\begin{observation}[$\cdot_{\mfu}$ is associative]\label{obs:matrix_associative}
For all $A\in \IT^{\alpha \times \beta}$, $B\in \IT^{\beta \times \gamma}$, and
$C\in \IT^{\gamma \times \delta}$, we have that
$(A\cdot_{\mfu}B)\cdot_{\mfu}C=A\cdot_{\mfu}(B\cdot_{\mfu}C)$.
\end{observation}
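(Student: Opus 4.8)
The plan is to recognise $\cdot_{\mfu}$ as matrix multiplication over a semiring and then invoke the (standard) associativity of matrix multiplication over any semiring. Concretely, the Kleene operators are $\OR(a,b)=\max(a,b)$ and $\AND(a,b)=\min(a,b)$ with respect to the chain $0\prec\mfu\prec 1$ on $\IT$, so $(\IT,\OR,\AND,0,1)$ is a commutative semiring: it is the max--min semiring of a totally ordered set, distributivity of $\min$ over $\max$ holds because a chain is a distributive lattice, $0$ and $1$ are the least and greatest elements (hence the additive and multiplicative identities), and $0$ annihilates under $\AND$. I would then show that $\cdot_{\mfu}$, the hazard-free extension of Boolean matrix multiplication, is exactly the entrywise Kleene matrix product, i.e.
\[
(A\cdot_{\mfu}B)_{ik}=\bigstarr_{A'\in\res(A),\,B'\in\res(B)}(A'B')_{ik}=\max_{j}\min(A_{ij},B_{jk})\,.
\]
Granting this, associativity is the one-line computation
\[
\bigl((A\cdot_{\mfu}B)\cdot_{\mfu}C\bigr)_{il}=\max_{k}\min\Bigl(\max_{j}\min(A_{ij},B_{jk}),C_{kl}\Bigr)=\max_{j,k}\min(A_{ij},B_{jk},C_{kl})=\bigl(A\cdot_{\mfu}(B\cdot_{\mfu}C)\bigr)_{il}\,,
\]
where the middle steps use commutativity and associativity of $\max$ and $\min$ together with distributivity of $\min$ over $\max$, and the claim above is applied to both nested products (note $A\cdot_{\mfu}B$ is itself a matrix over $\IT$).

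So the real content is the identity $(A\cdot_{\mfu}B)_{ik}=\max_{j}\min(A_{ij},B_{jk})$, which I would prove entrywise. Fixing $i,k$, the $(i,k)$-entry of $A'B'$ depends only on row $i$ of $A'$ and column $k$ of $B'$, and as $(A',B')$ ranges over $\res(A)\times\res(B)$ the pair (row $i$ of $A'$, column $k$ of $B'$) ranges over $\res(A_{i\cdot})\times\res(B_{\cdot k})$ with the two factors independent; hence $(A\cdot_{\mfu}B)_{ik}$ equals the hazard-free extension of the inner-product function $(u,v)\mapsto\bigvee_{j}(u_j\wedge v_j)$ evaluated at $(A_{i\cdot},B_{\cdot k})$. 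This inner-product formula is \emph{read-once} in its variables, and a read-once formula computes its own hazard-free extension (standard; an easy induction on the formula structure). Equivalently one checks this concrete instance directly by a three-way case analysis on $\max_j\min(u_j,v_j)$: if it equals $1$, some $u_j=v_j=1$ is stable so every resolution evaluates to $1$; if it equals $0$, every product $u_jv_j$ contains a stable $0$ so every resolution evaluates to $0$; if it equals $\mfu$, setting all $\mfu$s to $0$ evaluates to $0$ while raising the coordinates of a $\mfu$-valued term to $1$ evaluates to $1$, so the $\bigstarr$ is $\mfu$. In every case the hazard-free extension equals $\max_j\min(u_j,v_j)=\bigvee_j\AND(u_j,v_j)$.

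I do not expect a genuine obstacle — this is an ``Observation'' — but the one point that needs care is precisely the reduction above: that nesting hazard-free extensions of matrix multiplication introduces no spurious extra $\mfu$s, i.e.\ that $\cdot_{\mfu}$ is read-once--computable rather than something strictly less defined. This is exactly what fails for the addition example in the footnote: the sum of two $2$-bit numbers has no read-once formula (carry bits force an input to be reused), its hazard-free extension is not a semiring operation, and associativity breaks. Once $\cdot_{\mfu}$ has been pinned down as multiplication over the max--min semiring, Observation~\ref{obs:matrix_associative} is immediate, including for the rectangular shapes stated.
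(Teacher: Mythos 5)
Your proof is correct and takes essentially the same route as the paper: identify $\cdot_{\mfu}$ with matrix multiplication over the commutative semiring $(\IT,\OR,\AND)$ and then run the standard associativity computation. The only difference is that you spell out the identification $(A\cdot_{\mfu}B)_{ik}=\bigvee_j\AND(A_{ij},B_{jk})$ explicitly (via the read-once/case analysis), a step the paper leaves implicit (it is in effect its Corollary~\ref{cor:monotone}, monotone circuits being hazard-free), so no gap.
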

\begin{IEEEproof}
As $\OR$ and $\AND$ are associative also on $\IT$, this follows by the same,
straightforward calculation for matrices over arbitrary
(semi)rings.\footnote{Note that $(\IT,\OR,\AND)$ is only a (commutative)
semiring, as its ``addition,'' i.e., $\OR$, has no inverses.}
\end{IEEEproof}

To prove Theorem~\ref{thm:extension_commutes}, we first need to establish that
matrix multiplication indeed is equivalent to function composition for stable
inputs.
\begin{lemma}\label{lem:matrix_stable}
Let $f$ and $g$ be functions $f\colon S\to T$ and $g\colon T\to U$.
For all $A\in\mathcal{P}_{2^k}(S)$ and $C\in\mathcal{P}_{2^k}(U)$, it holds that
$\left(\mathcal{M}_g\cdot \mathcal{M}_f\right)_{CA}=\left(\mathcal{M}_{g\circ f}\right)_{CA}$.
\end{lemma}
\begin{IEEEproof}
Suppose $\M{g\circ f}_{CA}=1$, i.e., $(g\circ f)(A)=g(f(A))\subseteq C$. Note that $A\in\encset{S}\Rightarrow f(A)\in\encset{T}$. Then,
\begin{align*}
\left(\M{g}\cdot \M{f}\right)_{CA} &= \sum_{B\in\mathcal{P}_{2^k}(T)}
\M{g}_{CB} \M{f}_{BA}\\
&\geq \M{g}_{Cf(A)}\M{f}_{f(A)A}\\
&= 1\cdot 1
= 1\,,
\end{align*}
where we use that $\M{f}_{f(A)A} = 1$ by Definition~\ref{def:universal}
and $\M{g}_{Cf(A)}=1$ because $g(f(A))\subseteq C$.

Now consider the case that $\M{g\circ f}_{CA}=0$, i.e., there exists $a\in A$ so that $g(f(a))\not\subseteq C$.
Accordingly,
\begin{equation*}
\forall B\in\mathcal{P}_{2^k}(T)\colon \M{f}_{BA}=1 \Rightarrow \M{g}_{CB}=0\,,
\end{equation*}
because
\begin{equation*}
\M{f}_{BA}=1 \Leftrightarrow f(A)\subseteq B
\Rightarrow g(B)\not \subseteq C\Leftrightarrow \M{g}_{CB}=0\,.
\end{equation*}
Thus, for all $B\in\mathcal{P}_{2^k}(T)$, we have that $\M{f}_{BA}\M{g}_{CB}=0$, leading to
\begin{displaymath}
\left(\mathcal{M}_{g}\cdot \mathcal{M}_{f}\right)_{CA} = \sum_{B\subseteq T}
\M{g}_{CB}\M{f}_{BA} = 0\:.\IEEEQEDhereeqn
\end{displaymath}
\end{IEEEproof}

\subsubsection*{Monotone resolutions}
To understand how multiplying matrices plays out when the multiplicands are
not stable, we exploit the monotonicity of matrix multiplication. Because
flipping matrix entries of multiplicands from $0$ to $1$ can only flip entries
from $0$ to $1$ in the product, we can restrict our attention to only two
resolutions of each matrix: we simultaneously replace all $\mfu$ entries by either $0$ or $1$,
respectively.
\begin{definition}\label{def:matrix_simple}
For $A\in \IT^{\alpha \times \beta}$ and $b\in \IB$, define $A^{(b)}\in
\IB^{\alpha\times \beta}$ via                                                   \begin{align*}
\forall (i,j)\in \{1,\ldots,\alpha\}\times \{1,\ldots,\beta\}\colon A^{(b)}_{ij}\coloneqq
\begin{cases} b & \mbox{if }A_{ij}=\mfu\\
A_{ij} & \mbox{else}\,.
\end{cases}
\end{align*}
\end{definition}
With this definition, the above intuition is formalized by the following lemma.
\begin{lemma}\label{lem:matrix_simple}
For all $G\in \IT^{\alpha\times \beta}$, $F\in \IT^{\beta\times \gamma}$ and all $i\in\{1,\ldots,\alpha\}$, $j\in\{1,\ldots,\gamma\}$, we have
\[\left(G\cdot_{\mfu}F\right)_{ij} = \mfu \Leftrightarrow
\left(G^{(0)}\cdot F^{(0)}\right)_{ij} = 0 \land \left(G^{(1)}\cdot F^{(1)}\right)_{ij} = 1.\]
\end{lemma}
\begin{IEEEproof}
Fix $i\in \{1,\ldots,\alpha\}$ and $j\in \{1,\ldots,\gamma\}$. If
$\left(G\cdot_{\mfu}F\right)_{ij}=b\in \IB$, i.e., for all $G',F'$ such that
$G'\in\res(G)$ and $F'\in\res(F)$, $\left(G'\cdot F'\right)_{ij}=b$, then since $G^{(0)},G^{(1)}\in
\res(G)$ and $F^{(0)},F^{(1)}\in \res(F)$ it holds that that \[\left(G^{(0)}\cdot F^{(0)}\right)_{ij} = \left(G^{(1)}\cdot F^{(1)}\right)_{ij} = b.\]
Now consider the case that $\left(G\cdot_{\mfu}F\right)_{ij}=\mfu$. Thus,
there are $G',G''\in \res(G)$ and $F',F''\in \res(F)$ satisfying that
$\left(G'\cdot F'\right)_{ij}=0$ and $\left(G''\cdot F''\right)_{ij}=1$,
respectively. It follows that
\begin{align*}
\left(G^{(0)}\cdot F^{(0)}\right)_{ij}
&= \sum_{k=1}^{\beta} G^{(0)}_{ik}F^{(0)}_{kj}\\
&\leq \sum_{k=1}^{\beta} G'_{ik}F'_{kj}
=\left(G'\cdot F'\right)_{ij}
= 0
\end{align*}
and, analogously,
\begin{displaymath}
\left(G^{(1)}\cdot F^{(1)}\right)_{ij}\geq \left(G''\cdot F''\right)_{ij}=1\,.\IEEEQEDhereeqn
\end{displaymath}
\end{IEEEproof}
\subsubsection*{Hazard-free circuit complexity}
Jukna~\cite{jukna2021notes} presents an upper bound
on the complexity of the hazard-free implementation of a Boolean function $f$.
\begin{theorem}[\cite{jukna2021notes}]\label{thm:jukna}
  Given a Boolean function $f\colon\IB^n\rightarrow\IB$, there is a hazard-free
  circuit implementing $f$ that has size $\BO(2^n/n)$ and depth $\BO(n)$.
\end{theorem}
The size bound of the implementation is shown in \cite{jukna2021notes}. The depth
bound follows by the following argument.
The hazard-free construction consists of two consecutive
parts that are recursively defined.
The first part uses at most $m$ recursions and the second part uses $n-m$ recursions of the same recursion step.
As each recursion step has a constant size, the construction has depth $\BO(n)$.

\subsubsection*{Proving the key stepping stone}
Using Lemma~\ref{lem:matrix_simple}, proving Theorem~\ref{thm:extension_commutes} is reduced to
showing correct behavior for matrices $\left(\bigstarr_{j\in J} \mathcal{M}_{f_j}\right)^{(0)}$
and $\left(\bigstarr_{j\in J} \mathcal{M}_{f_j}\right)^{(1)}$ instead of all
resolutions of $\bigstarr_{j\in J} \mathcal{M}_{f_j}$ and $\bigstarr_{i\in I} \mathcal{M}_{g_i}$.
\begin{IEEEproof}[Proof of Theorem~\ref{thm:extension_commutes}]
  Define $\preceq$ as the partial order $b\prec \mfu$ for $b\in \IB$ and observe
  that $\bigstarr X \preceq \bigstarr Y$ for $X\subseteq Y\subseteq \IB$.
  By Observation~\ref{obs:resstar}, we obtain for $\mathcal{M}_{g_i}$ (and accordingly $\mathcal{M}_{f_j}$) that
  \begin{align*}
    \{\mathcal{M}_{g_i}|i\in I\}\subseteq \res\left(\bigstarr_{i\in I}\mathcal{M}_{g_i}\right)\,.
  \end{align*}
  Thus, by the definition of the hazard-free extension (Definition~\ref{def:hfree})
  and the resolution (Definition~\ref{def:resolution}),
  \begin{align*}
    &\left(\left(\bigstarr_{i\in I} \mathcal{M}_{g_i}\right)\cdot_{\mfu}
    \left(\bigstarr_{j\in J} \mathcal{M}_{f_j}\right)\right)_{CA}\\
    &=\bigstarr \left(\res\left(\bigstarr_{i\in I}
    \mathcal{M}_{g_i}\right)\cdot \res\left(\bigstarr_{j\in J}
    \mathcal{M}_{f_j}\right)\right)_{CA}\\
    & \succeq \bigstarr\left(\{\mathcal{M}_{g_i}|i\in I\}\cdot \{\mathcal{M}_{f_j}|j\in J\}\right)_{CA}\\
    & = \left(\bigstarr_{(i,j)\in
    I\times J} \mathcal{M}_{g_i}\cdot \mathcal{M}_{f_j}\right)_{CA}\\
    &= \left(\bigstarr_{(i,j)\in
    I\times J} \mathcal{M}_{g_i\circ f_j}\right)_{CA}\,,
  \end{align*}
  where the last equality follows from Lemma~\ref{lem:matrix_stable}.
  The claimed equality follows if the l.h.s.\ equals $b\in \{0,1\}$.

  It remains to show the claimed equality assuming that the l.h.s.\ equals $\mfu$.
  By application of Lemma~\ref{lem:matrix_simple} with $G=\bigstarr_{i\in I}
  \mathcal{M}_{g_i}$ and $F=\bigstarr_{j\in J} \mathcal{M}_{f_j}$, we obtain that
  \begin{equation*}
    \left(\left(G^{(0)}\cdot F^{(0)}\right)_{CA}=0\right) \wedge \left(\left(G^{(1)}\cdot F^{(1)}\right)_{CA}=1\right)\,.
  \end{equation*}
  By the definition of matrix multiplication, this is equivalent to
  \begin{align}
    \forall B\in \mathcal{P}_{2^k}(T)&\colon G^{(0)}_{CB}=0 \lor F^{(0)}_{BA}=0\,, \label{eq:allzero}\\
    \exists B\in \mathcal{P}_{2^k}(T)&\colon G^{(1)}_{CB}=1\land F^{(1)}_{BA}=1\,. \label{eq:existsone}
  \end{align}
  We observe from Definition~\ref{def:matrix_simple} that for $b\in\IB$ we have that
  \begin{align*}
    \left(\bigstarr_{i\in I}\mathcal{M}_{g_i}\right)^{(b)}_{CB}=b\Leftrightarrow
    \exists i\in I\colon\left(\mathcal{M}_{g_i}\right)_{CB}=b\,;
  \end{align*}
  an analogous statement holds for $\mathcal{M}_{f_j}$.
  Plugging this observation into equations \eqref{eq:allzero} and \eqref{eq:existsone}, we get that
  \begin{align*}
    \forall B\in\mathcal{P}_{2^k}(T)\; \exists (i,j)\in I\times J&\colon\\
    \M{g_i}_{CB}=0&\lor \M{f_j}_{BA}=0\,, \label{eq:def1}\tag{3}
  \end{align*}
  and
  \begin{align*}
    \exists B\in\mathcal{P}_{2^k}(T)\; \exists (i,j)\in I\times J&\colon\\
    \M{g_i}_{CB}=1&\land\M{f_j}_{BA}=1\label{eq:def2}\,.\tag{4}
  \end{align*}
  Let $B_0=\bigcup_{j\in J}f_j(A)$ be the subset of $T$, to which states in $A$ are mapped to
  by any $f_j$. As $|f_j(A)|\leq |A|$, cardinality $|B_0|$ is at most $|J|\cdot|A|$.
  By assumption $|J|\cdot|A|\leq2^k$, we obtain $B_0\in\encset{T}$. Since $f_j(A)\subseteq B_0$ by
  construction, it holds that $\M{f_j}_{B_0 A}=1$ for all $j\in J$.
  Equation~\eqref{eq:def1} thus entails that
  \begin{equation*}
    \exists i\in I\colon \M{g_i}_{C B_0}=0 \Leftrightarrow \exists i\in I\colon g_i(B_0)\not
    \subseteq C\,.
  \end{equation*}
  Hence, there are $i_0\in I$ and $x\in B_0$ such that $g_{i_0}(x)\notin C$. By construction, $x\in f_{j_0}(A)$ for some $j_0\in J$,
  yielding that $(g_{i_0}\circ f_{j_0})(A)=g_{i_0}(f_{j_0}(A))\not \subseteq C$.
  We conclude that $\M{g_{i_0}\circ f_{j_0}}_{CA}=0$.

  Now consider equation~\eqref{eq:def2}, which says that there are indices  $i_1\in I$ and $j_1\in
  J$ such that $g_{i_1}(B_1)\subseteq C$ and
  $f_{j_1}(A)\subseteq B_1$. This immediately yields that $(g_{i_1}\circ
  f_{j_1})(A)\subseteq C$ and thus $\M{g_{i_1}\circ f_{j_1}}_{CA}=1$.

  The desired equality now follows, because
  \begin{align*}
    &\left(\bigstarr_{(i,j)\in I\times J} \mathcal{M}_{g_i\circ f_j}\right)_{CA}\\&\succeq
    \left(\bigstarr\{\mathcal{M}_{g_{i_0}\circ f_{j_0}}, \mathcal{M}_{g_{i_1}\circ f_{j_1}}\}\right)_{CA}\\
    &= \bigstarr\{\M{g_{i_0}\circ f_{j_0}}_{CA}, \M{g_{i_1}\circ f_{j_1}}_{CA}\}\\
    &= \bigstarr\{0,1\} = \mfu
  \end{align*}
  and $b\succeq\mfu$ only holds if $b=\mfu$.
\end{IEEEproof}

With Theorem~\ref{thm:extension_commutes} at our disposal, we are ready to prove our main result,
Theorem~\ref{thm:trans}. Following Step~1 and Step~2 of the parallel prefix framework
given in Section~\ref{subsec:ppc}, we need to compute $\pi_i=t_{x_{i}} \circ \hdots \circ t_{x_1}$
for all prefixes $x_i\hdots x_1$ of the input string.
This computation can be phrased in terms of matrix multiplications, which is shown
by the following corollary. It readily follows by inductive application of
Theorem~\ref{thm:extension_commutes}.
\begin{corollary}\label{cor:extension_commutes}
Suppose that for $i\in [n]$, we are given mappings $E_i\colon \IB^{\ell}\to F_i$
from input symbols $\IB^{\ell}$ to function spaces $F_i$. Moreover, for all $i\in [n-1]$ the codomain of
functions from $F_i$ equals the domain of functions from $F_{i+1}$. Let
$E\colon \IB^{n\ell}\to(F_0\to F_{n-1})$ denote a function that maps a binary string $x\in\IB^{n\ell}$
to the composition of the corresponding functions, $E(x)\coloneqq\circ_{i=0}^{n-1} E_i(x_i)$.
Then, for all $x\in\IT^{n\ell}$, \begin{align*}
\M{E(\cdot)}_{\mfu}(x)&=\M{E_{n-1}(\cdot)}_{\mfu}(x_{n-1})\cdot_{\mfu}
\M{E_{n-2}(\cdot)}_{\mfu}(x_{n-2})\\
&\quad\quad\cdot_{\mfu}\ldots\cdot_{\mfu}\M{E_0(\cdot)}_{\mfu}(x_0)\,.
\end{align*}
\end{corollary}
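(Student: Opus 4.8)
The plan is to prove Corollary~\ref{cor:extension_commutes} by induction on $n$, using Theorem~\ref{thm:extension_commutes} together with Observation~\ref{obs:faithful} (or rather its underlying reasoning) to handle each inductive step. The base case $n=1$ is immediate: by definition $E = E_0$, and $\M{E}_{\mfu}(x) = \M{E_0}_{\mfu}(x_0)$ trivially. For the inductive step, suppose the claim holds for $n-1$; I would write $E' $ for the function mapping $x' \in \prod_{i=0}^{n-2}\IB^{\ell_i}$ to $\circ_{i=0}^{n-2} E_i(x_i)$, so that $E(x) = E_{n-1}(x_{n-1}) \circ E'(x')$ where $x' = (x_0,\ldots,x_{n-2})$.

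The heart of the argument is to show $\M{E}_{\mfu}(x) = \M{E_{n-1}}_{\mfu}(x_{n-1}) \cdot_{\mfu} \M{E'}_{\mfu}(x')$; once this is established, applying the induction hypothesis to $\M{E'}_{\mfu}(x')$ and unfolding the right-hand side as an iterated $\cdot_{\mfu}$ product finishes the proof (associativity of $\cdot_{\mfu}$ from Observation~\ref{obs:matrix_associative} is what lets us write the product without parentheses). To prove this single-step identity, unfold the definitions: $\M{E}_{\mfu}(x) = \bigstarr_{y \in \res(x)} \M{E(y)} = \bigstarr_{y_{n-1} \in \res(x_{n-1}),\, y' \in \res(x')} \M{E_{n-1}(y_{n-1}) \circ E'(y')}$, where $\res(x)$ factors as the product of the $\res(x_i)$ coordinatewise. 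Now set $g_i$ to range over $\{E_{n-1}(y_{n-1}) : y_{n-1} \in \res(x_{n-1})\}$ and $f_j$ over $\{E'(y') : y' \in \res(x')\}$; then the displayed expression is exactly $\bigstarr_{(i,j)} \M{g_i \circ f_j}$ read entrywise, which by Theorem~\ref{thm:extension_commutes} equals $\left(\bigstarr_i \M{g_i}\right) \cdot_{\mfu} \left(\bigstarr_j \M{f_j}\right) = \M{E_{n-1}}_{\mfu}(x_{n-1}) \cdot_{\mfu} \M{E'}_{\mfu}(x')$. One small technical point to dispatch: the $g_i$ and $f_j$ should be indexed by (possibly with repetition) the finitely many elements of $\res(x_{n-1})$ and $\res(x')$ respectively, and Theorem~\ref{thm:extension_commutes} is stated for arbitrary finite index sets $[\ell']$, $[\ell]$, so this causes no difficulty — we just need the composition hypothesis (codomain of $F_i$ matches domain of $F_{i+1}$) to ensure $g_i \circ f_j$ is well-defined, which is exactly what is assumed.

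The main obstacle, such as it is, is purely bookkeeping: matching up the product structure of $\res(x)$ over the first $n-1$ coordinates versus the last coordinate with the two-index-family structure in Theorem~\ref{thm:extension_commutes}, and being careful that $\bigstarr$ over a set of matrices is computed entrywise (which is immediate from the bitwise extension of $*$). There is no real mathematical difficulty here — the theorem has already done the work — so I would keep the writeup to a few lines, essentially stating the induction, invoking Theorem~\ref{thm:extension_commutes} for the step with the index sets taken as $\res(x_{n-1})$ and $\res(x')$, and appealing to Observation~\ref{obs:matrix_associative} to collapse the nested products.
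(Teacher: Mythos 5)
Your proposal is correct and is exactly the argument the paper intends: the paper gives no explicit proof, stating only that the corollary ``readily follows by inductive application of Theorem~\ref{thm:extension_commutes},'' and your induction—peeling off $E_{n-1}$, applying the theorem with the index families $\res(x_{n-1})$ and $\res(x')$, and using Observation~\ref{obs:matrix_associative} to drop parentheses—is precisely that inductive application, with the bookkeeping (product structure of $\res(x)$, entrywise $\bigstarr$, repetitions being harmless) handled correctly.
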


Following this insight we observe that the evaluation of the functions corresponds
to hazard-free matrix-vector multiplication.

\begin{corollary}\label{cor:faithful}
  For $j\in J$, let $f_j\colon S\to T$. Assume that $A\in \encset{T}$, $S'\in\encset{S}$, and $|J|\cdot|S'|\leq k$. Then
  \begin{equation*}
    \left(\left(\bigstarr_{j\in J}\mathcal{M}_{f_j}\right)\cdot_\mfu\left(\bigstarr_{s\in S'}e^{(s)}\right)\right)_{A} =
    \left(\bigstarr_{(j,s)\in J\times S'}e^{(f_j(s))}\right)_{A}\,.
  \end{equation*}
\end{corollary}
\begin{IEEEproof}
  Define $g_s\colon \{\bullet\}\rightarrow S$ by $g_s(\bullet)\coloneqq s$ for $s\in S$,
  such that $\M{g_s}_{\{\bullet\}A}=\left(e^{(s)}\right)_A$.
  By Theorem~\ref{thm:extension_commutes}, we thus get that
  \begin{align*}
    &\left(\left(\bigstarr_{j\in J}\mathcal{M}_{f_j}\right)\cdot_\mfu\left(\bigstarr_{s\in S'}e^{(s)}\right)\right)_{A}\\
    &= \left(\left(\bigstarr_{j\in J}\mathcal{M}_{f_j}\right)\cdot_\mfu
    \left(\bigstarr_{s\in S'}\mathcal{M}_{g_s}\right)\right)_{\{\bullet\}A}\\
    &= \left(\bigstarr_{(j,s)\in J\times S'}\mathcal{M}_{f_j\circ g_s}\right)_{\{\bullet\}A}
\end{align*}
  The claim follows by applying the definition of $g_s$ again.
\end{IEEEproof}

\subsubsection*{Multiplication of all prefixes}
Corollary~\ref{cor:extension_commutes} uses the hazard-free matrix product of all input prefixes. This
can be efficiently implemented, similar to the parallel prefix computation in the
approach of Ladner and Fischer.
\begin{corollary}[of~{\cite[Section 2]{ladner1980parallel}}]\label{col:ppc}
  For input matrices $\mathcal{A}_{n-1}, \dots, \mathcal{A}_0$ of size $\alpha\times\alpha$,
  there is a circuit computing the hazard-free Boolean matrix multiplication of all prefixes;
  $\mathcal{A}_i\cdot_\mfu\ldots\cdot_\mfu\mathcal{A}_0$,
for each $i\in[n]$. The circuit
  \begin{align*}
    \text{has size}&\quad \BO\left(\alpha^3n\right)\\
    \text{and depth}&\quad \BO\left(\log\alpha\log n \right)\,.
  \end{align*}
\end{corollary}
\begin{IEEEproof}
  By Observation~\ref{obs:matrix_associative}, $\cdot_{\mfu}$ is associative.
  For an associative operator, Ladner and Fischer~\cite{ladner1980parallel} present a family of
  circuits computing the application of all prefixes of the input.
  Let $c$ be the size and $d$ the depth of a circuit implementing the operator. The family has
  asymptotically optimal size $\BO(cn)$ and depth $\BO(d\log n)$.
  Corollary~\ref{cor:monotone} offers an implementation of hazard-free Boolean $\alpha\times\alpha$
  matrix multiplication of size $c=\alpha^3$ and depth $d=\log\alpha$.
\end{IEEEproof}

\subsubsection*{Output step}
Before putting the above pieces together to derive our main results, we need to
address how the final output is computed, i.e., Step~4. Here, we can exploit
that (i) the output does not need to be represented in the universal encoding, and
(ii) the universal encoding used in the previous computations holds additional
information that simplifies determining the output in a hazard-free way.
We leverage these points in a case analysis to minimize the cost
of the output stage.

The input to Step~4 is the vector encoding state $s_{i-1}$ and input $x_i$ for
each $i\in\{1\ldots n\}$; it computes output $o(s_{i-1},x_i)$. We restrict the
output function to an input symbol, as we did for the transition function, i.e.,
$o_\sigma\colon S\rightarrow\Lambda$ for $\sigma\in\Sigma$ is defined by
$o_\sigma(s)\coloneqq o(s,\sigma)$.
In what follows we describe the computation of output bit $o_\sigma(s)_j$ ($=o(s,\sigma)_j$),
for $j\in[m]$.
We remark that the preimage of $1$ under
$o_\sigma(s)_j$ is a set of states. Recall that by Definition~\ref{def:universal},
the state vector $e^{(s_{i-1})}$ encodes not only state $s_{i-1}$, but indicates
for each subset of states $S'\subseteq S$ (with cardinality less or equal to
$2^k$) whether $s_{i-1}\in S'$.
Thus, we can simply check whether $s_{i-1}$ lies in the set that of states which
$o_\sigma(s_{i-1})_j$ maps to $1$ (provided its cardinality is at most $2^k$).
\begin{definition}\label{def:preimage}
  For an input symbol $\sigma\in\Sigma$ and $j\in[m]$, we define
  \begin{equation*}
    \Asigj \coloneqq \{s\in S | o(s,\sigma)_j=1\}\,.
  \end{equation*}
\end{definition}

Note that if $2^k < |S|$, we reduce the size of the universal encoding by
encoding only sets of size less or equal to $2^k$. Hence, we might not
have computed a bit indicating whether $s_{i-1}\in \Asigj$ is an entry of the
vector $e^{(s_{i-1})}$.
However, essentially all output bits are conveniently available if
$\max_{\sigma\in\Sigma,j\in[m]}|\Asigj| \leq 2^k$, which is captured by the following
lemma.

\begin{lemma}\label{lem:preimage}
  Given $k\in\NN$ and $S'\subseteq S$, $\Sigma'\subseteq\Sigma$, if
  $\max_{\sigma\in\Sigma,j\in[m]}|\Asigj| \leq 2^k$ we have that
  \begin{equation}\label{eqn:preimage}
    \bigstarr_{s\in S', \sigma\in\Sigma'} o(s,\sigma)_j =
    \bigstarr_{s\in S', \sigma\in\Sigma'} e^{(s)}_{\Asigj}\,.
  \end{equation}
\end{lemma}
\begin{IEEEproof}
  As $2^k \geq \max_{\sigma\in\Sigma',j\in[m]}|\Asigj|$, by \mbox{Definition~\ref{def:universal}}
  every $\Asigj$ is encoded in $e^{(s)}$, i.e., every entry
  $e^{(s)}_{\Asigj}$ exists.

  Next, we distinguish three cases for every evaluation of the l.h.s.\ of~\eqref{eqn:preimage}: $1$, $0$, and $\mfu$.
  In the first case, $\bigstarr_{s\in S', \sigma\in\Sigma} o(s,\sigma)_j = 1$,
  we apply Definition~\ref{def:preimage} and Definition~\ref{def:universal} to show the
  claim, as follows.
  \begin{align*}
    & \bigstarr_{s\in S', \sigma\in\Sigma'} o(s,\sigma)_j = 1 \\
    \Leftrightarrow & \,\forall s\in S',\sigma\in\Sigma':\,o(s,\sigma)_j = 1 \\
    \Leftrightarrow & \,\forall s\in S',\sigma\in\Sigma':\,s\in\Asigj \tag{\small Def~\ref{def:preimage}}\\
    \Leftrightarrow & \,\forall s\in S',\sigma\in\Sigma':\,e^{(s)}_{\Asigj} = 1 \tag{\small Def~\ref{def:universal}}\\
    \Leftrightarrow & \bigstarr_{s\in S', \sigma\in\Sigma'} e^{(s)}_{\Asigj} = 1
  \end{align*}
  The second case, $\bigstarr_{s\in S', \sigma\in\Sigma} o(s,\sigma)_j = 0$, is
  treated analogously.
  \begin{align*}
    & \bigstarr_{s\in S', \sigma\in\Sigma'} o(s,\sigma)_j = 0 \\
    \Leftrightarrow & \,\forall s\in S',\sigma\in\Sigma':\,o(s,\sigma)_j = 0 \\
    \Leftrightarrow & \,\forall s\in S',\sigma\in\Sigma':\,s\notin\Asigj \tag{\small Def~\ref{def:preimage}}\\
    \Leftrightarrow & \,\forall s\in S',\sigma\in\Sigma':\,e^{(s)}_{\Asigj} = 0 \tag{\small Def~\ref{def:universal}}\\
    \Leftrightarrow & \bigstarr_{s\in S', \sigma\in\Sigma'} e^{(s)}_{\Asigj} = 0
  \end{align*}
  In case the l.h.s.\ of~\eqref{eqn:preimage} evaluates to $\mfu$ the statement
  follows from the first two cases. As we showed equivalence in case $1$ and $0$,
  we deduce, that in case the l.h.s.\ of~\eqref{eqn:preimage} evaluates to $\mfu$,
  the r.h.s.\ of~\eqref{eqn:preimage} also evaluates to $\mfu$.
\end{IEEEproof}

If the size of the largest preimage ($\max_{\sigma\in\Sigma,j\in[m]}|\Asigj|$) exceeds
$2^k$ not all preimages have a corresponding entry in $e^{(s_{i-1})}$.
We need to compute whether $s_{i-1}$ is in the preimage. For this purpose, we
define a cover of $\Asigj$ with sets of size at most $2^k$.

\begin{definition}\label{def:cover}
  For an input symbol $\sigma\in\Sigma$, $j\in[m]$, and $k\in\NN$ we define $\Asigjk$, the cover of $\Asigj$ containing
  sets of cardinality smaller or equal to $2^k$:
  \begin{equation*}
    \Asigjk \coloneqq
    \begin{cases}
      \{ \Asigj \} & \mbox{if } |\Asigj| \leq 2^k\,,\\
      \{ A\subseteq\Asigj| |A| = 2^k\} & \mbox{else.}
    \end{cases}
  \end{equation*}
\end{definition}

If the state of the transducer is in one of the sets in $\Asigjk$, then it is
also in $\Asigj$. All sets in $\Asigjk$ have a corresponding entry in the state vector.
We can take the $\OR$ over all entries to see whether the transducer is in a state
of $\Asigj$ and hence whether it outputs $1$.

\begin{lemma}\label{lem:cover}
  For $k\in\NN$, $j\in[m]$, and $S'\subseteq S$, $\Sigma'\subseteq\Sigma$
  such that $2^k < \max_{\sigma\in\Sigma,j\in[m]}|\Asigj|$,
  if $|S'|\leq 2^k$ we have that
  \begin{equation}\label{eqn:cover}
    \bigstarr_{s\in S', \sigma\in\Sigma'} o(s,\sigma)_j =
    \bigstarr_{\sigma\in\Sigma'} \bigvee_{A\in\Asigjk} \bigstarr_{s\in S'} e^{(s)}_{A}
  \end{equation}
\end{lemma}
\begin{IEEEproof}
  Every $A\in \Asigjk$ has cardinality smaller or equal to $2^k$. Hence, there is
  an entry in vector $e^{(s)}$ corresponding to $A$, i.e., entry $e^{(s)}_A$ exists in the encoding.

  From its definition we observe that $\Asigjk$ is indeed a cover of $\Asigj$, i.e.,
  \begin{align}
    \bigcup_{A\in \Asigjk} A &= \Asigj\,.\label{eqn:unionA}
  \end{align}
  Moreover, Definition~\ref{def:cover} ensures that every subset of $\Asigj$ of
  size at most $2^k$ is contained in at least one set from $\Asigjk$.
  On the other hand, each $A\in \Asigjk$ is a subset of $\Asigj$.
  Hence, the assumption that $|S'|\leq 2^k$ implies that
  \begin{align}
    S'\subseteq \Asigj &\Leftrightarrow \exists A\in \Asigjk:\,S'\subseteq
    A\,.\label{eqn:splitA}
  \end{align}

  We make a case distinction on every possible evaluation of the l.h.s.\ of~\eqref{eqn:cover} and show
  the equality for each case.
The first case is $\bigstarr_{s\in S', \sigma\in\Sigma'} e^{(s)}_{\Asigj} = 1$.
  We get that
  \begin{align*}
    & \bigstarr_{s\in S', \sigma\in\Sigma'} o(s,\sigma)_j = 1 \\
    \Leftrightarrow & \,\forall s\in S',\sigma\in\Sigma':\,o(s,\sigma)_j = 1 \\
    \Leftrightarrow & \,\forall s\in S',\sigma\in\Sigma':\,s\in\Asigj \\
    \Leftrightarrow & \,\forall \sigma\in\Sigma':\,S'\subseteq\Asigj\\
    \Leftrightarrow & \,\forall \sigma\in\Sigma':\,\exists A\in\Asigj:\,S'\subseteq A \tag{\small by \eqref{eqn:splitA}}\\
    \Leftrightarrow & \,\forall \sigma\in\Sigma':\,\exists A\in\Asigj:\, \forall s\in S':\, s \in A \\
    \Leftrightarrow & \,\forall \sigma\in\Sigma':\,\exists A\in\Asigj:\, \forall s\in S':\, e^{(s)}_{A} = 1 \\
    \Leftrightarrow & \,\forall \sigma\in\Sigma':\,\exists A\in\Asigj:\, \bigstarr_{s\in S'} e^{(s)}_{A} = 1 \\
    \Leftrightarrow & \,\forall \sigma\in\Sigma':\,\bigvee_{A\in\Asigjk} \bigstarr_{s\in S'} e^{(s)}_{A} = 1 \\
    \Leftrightarrow & \,\bigstarr_{\sigma\in\Sigma'} \bigvee_{A\in\Asigjk} \bigstarr_{s\in S'} e^{(s)}_{A} = 1.
  \end{align*}
  The second case, $\bigstarr_{s\in S', \sigma\in\Sigma'} e^{(s)}_{\Asigj} = 0$,
  is treated similarly, where now $S'\cap \Asigj = \emptyset$ for each $\sigma \in \Sigma'$.
  \begin{align*}
    & \bigstarr_{s\in S', \sigma\in\Sigma} o(s,\sigma)_j = 0 \\*
    \Leftrightarrow & \,\forall s\in S',\sigma\in\Sigma':\,o(s,\sigma)_j = 0 \\
    \Leftrightarrow & \,\forall s\in S',\sigma\in\Sigma':\,s\notin\Asigj \\
    \Leftrightarrow & \,\forall s\in S',\sigma\in\Sigma',A\in\Asigjk:\, s\notin A \\
    \Leftrightarrow & \,\forall s\in S',\sigma\in\Sigma',A\in\Asigjk:\, e^{(s)}_{A} = 0 \\
    \Leftrightarrow & \,\forall \sigma\in\Sigma',A\in\Asigjk:\, \bigstarr_{s\in S'} e^{(s)}_{A} = 0 \\
    \Leftrightarrow & \,\forall \sigma\in\Sigma':\, \bigvee_{A\in\Asigjk} \bigstarr_{s\in S'} e^{(s)}_{A} = 0 \\
    \Leftrightarrow & \bigstarr_{\sigma\in\Sigma'} \bigvee_{A\in\Asigjk} \bigstarr_{s\in S'} e^{(s)}_{A} = 0
  \end{align*}

  In the final case, i.e., that the l.h.s.\ of~\eqref{eqn:cover} evaluates to
  $\mfu$, equality follows from the equivalence established in the previous two
  cases.
\end{IEEEproof}
\subsubsection*{Main theorem}
We are left with the task of showing that indeed the obtained circuit is correct,
i.e., prove Theorem~\ref{thm:trans}.
The correctness of the construction is proven foremost by the application of
Corollary~\ref{cor:extension_commutes} and Corollary~\ref{cor:faithful}.
The proof is mainly concerned with establishing the size
and depth bound for the obtained circuit. Without going into detail, for constant $|S|$ the
reader should be convinced that the depth of the circuit is logarithmic in $n$
because all operations except for Step~2 can be computed in parallel, while
Step~2 exploits the associativity of matrix multiplication to obtain a circuit
of depth logarithmic in $n$. The size of the circuit is linear in $n$, as Step~1,
Step~3 and Step~4 each use a constant number of operations for each input symbol and Step~2
can be performed asymptotically optimally with a linear number of operations.
\begin{figure}\label{fig:circuit}
  \includegraphics[width=\linewidth]{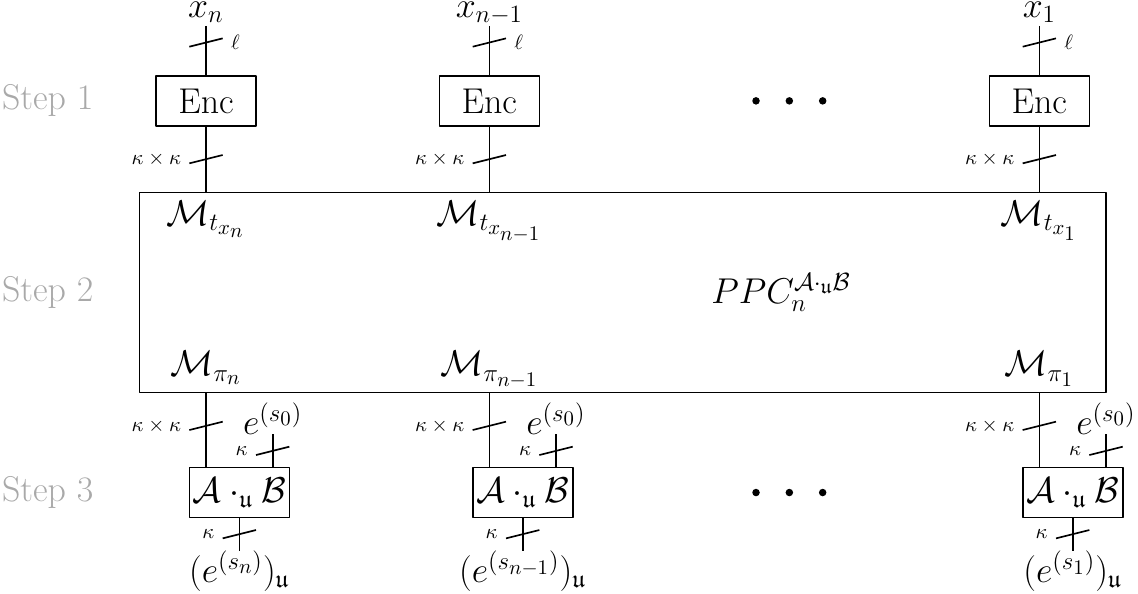}
  \caption{Steps $1$ to $3$ of the circuit implementing the transcription function.
  Enc denotes the computation of the universal encoding by the generic construction of \cite{jukna2021notes}.
  Hazard-free Boolean matrix multiplication is denoted by $\mathcal{A}\cdot_\mfu\mathcal{B}$.
}
\end{figure}
\mainres
\begin{IEEEproof}
We show that there is a circuit computing the hazard-free extension of the transcription
  function
  $\left(\tau_{T,n}\right)_{\mfu}(x)$ for every $x\in\Sigma^n$, where we replace at most $k$
  many bits with $\mfu$'s. We follow the steps of the PPC framework presented in Section~\ref{subsec:ppc}
  to compute output $((\tau_{T,n})_{\mfu}(x))_i=o_{\mfu}(s_{i-1},x_i)$ at each position
  $i\in\{1\ldots n\}$.

  The block diagram of steps $1$ to $3$ of the circuit is depicted in Figure~\ref{fig:circuit}.
  The circuit mostly consists of hazard-free matrix multiplication
  blocks. In particular, the $n$-input parallel prefix circuit is an arrangement of
  hazard-free matrix multiplication blocks.

  Step~$1$ computes the universal encoding of the restricted
  transition function $t_{x_i}$ (a $\kappa\times\kappa$ matrix)
  from input $x_i$.
  Noting that the computation in Step~$1$ evaluates a function from $\IB^{\ell}$ to $\IB^{\kappa^2}$,
  we can directly apply Theorem~\ref{thm:jukna} for each output bit seperately.
  Hence, there is a hazard-free encoding circuit of size $\BO((2^\ell / \ell)\kappa^2)$
  and depth $\BO(\ell)$ implementing this step.
  Thus, the computation of the universal encoding of
  ${t_{x_i}}$ for each $i$ in parallel has size
  $\BO((2^\ell/\ell)\kappa^2n)$ and depth $\BO(\ell)$.

  Step~$2$ computes the encoding $\mathcal{M}_{\pi_i}$ of the composition
  $\pi_i=t_{x_i}\circ\ldots\circ t_{x_1}$.
  We define $E_j(x_j)=t(\cdot,x_j)$ for $j\in \{1,\ldots,i\}$ such that
  for $E(x)=\circ^i_{j=1}E_j(x_j)$ we have
  \begin{equation*}
    \M{\pi_i}_\mfu
    = \M{E(x)}_\mfu
    = \M{E(\cdot)}_\mfu(x)\,.
  \end{equation*}
  Application of Corollary~\ref{cor:extension_commutes} then yields
  \begin{align*}\label{eqn:prefix}
   & \M{E(\cdot)}_{\mfu}(x)\\&=\M{E_j(\cdot)}_{\mfu}(x_j)\cdot_{\mfu}
    \M{E_{j-1}(\cdot)}_{\mfu}(x_{j-1})\\&\quad\quad\cdot_{\mfu}\ldots
    \cdot_{\mfu}\M{E_1(\cdot)}_{\mfu}(x_1)\,.\tag{9}
  \end{align*}
  By Corollary~\ref{col:ppc}, there is an efficient circuit computing the r.h.s.\ of
  \eqref{eqn:prefix} for each $i$.
  The Corollary gives also size $\BO(\kappa^3n)$ and depth $\BO(\log\kappa\log n)$ for Step~$2$.

  Step~$3$ computes the column unit vector corresponding to the $i$-th state, i.e.,
  evaluation of the composition $\pi_i$ on the initial state $s_0$.
  By Corollary~\ref{cor:faithful}, we can compute
$\left(e^{(s_i)}\right)_\mfu$
  by matrix-vector multiplication of $\M{\pi_i}_\mfu$ and
  $\left(e^{(s_0)}\right)_\mfu=e^{(s_0)}$. Hence, by Corollary~\ref{cor:monotone}
  there is a $k$-bit hazard-free circuit that computes $s_i$.
  Evaluation of $\pi_i$ in parallel for each $i$ yields size $\BO(\kappa^2n)$ and depth $\BO(\log\kappa)$ for Step~$3$.

  \begin{figure}\label{fig:outputstage}
    \includegraphics[width=\linewidth]{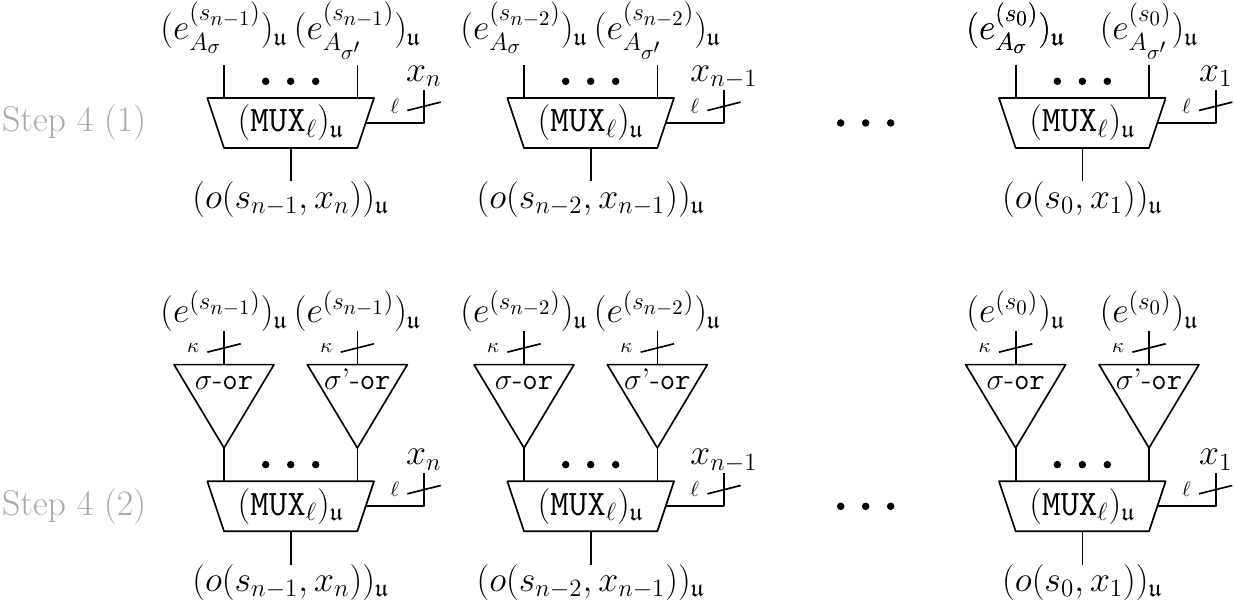}
    \caption{Step~4 of the circuit implementing the transcription function, assuming $m=1$, for
    the case that every preimage is encoded in the state vector (1) and the case that there is
    at least one preimage not encoded in the state vector (2).
    To increase readability we do not enumerate all elements of $\Sigma$ but use
    $\sigma$, $\sigma'$, and dots to denote that the step is repeated for every element of $\Sigma$.}
  \end{figure}

  Finally, Step~$4$ computes the $i$-th output $o(s_{i-1}, x_i)$ for each
  $i\in\{1\ldots n\}$.
W.l.o.g., assume that the width of an output symbol is $1$, i.e., $m=1$ and
  hence $\Lambda\subseteq\IB$ (otherwise repeat
  the computation for each output bit separately).
  Viewing the computation in Step~$4$ as a function from $\IB^{\kappa + \ell}$ to $\IB$, we could
  apply Theorem~\ref{thm:jukna} to obtain a circuit of size $\BO(2^{\kappa+\ell}/(\kappa+\ell))$
  and depth $\BO(\kappa + \ell)$. We now show how to obtain better results, bounding the
  size by $\BO(2^\ell \kappa)$ with a circuit of depth $\BO(\log\kappa + \ell)$.

  Recall Definition~\ref{def:preimage}.  As $j=0$ we omit $j$ from the
  notation and write $\Asig$ and $\Asigk$ instead of $\Asigj$ and $\Asigjk$.
  First, consider the case that
  $\max_{\sigma\in\Sigma} |\Asig|\leq 2^k$.
  A depiction of this case is given in Figure~\ref{fig:outputstage} (1).
  By directly using the outputs of the gates computing the respective bits of
  (the universal encoding of) the state vector, we readily obtain $o(s_{i-1},
  \sigma)$ for each $\sigma\in\Sigma$.
  Thus, we are left with the
  task of choosing the output corresponding to input $x_i$. We can do so by using
  a $\MUX$. By Corollary~\ref{col:cmux}, the implementation
  $\MUX(e^{(s_{i-1})}_{\Asig},\ldots,e^{(s_{i-1})}_{A_{\sigma'}},x_i)$
  has size $\BO(2^\ell)$ and depth $\BO(\ell)$, where $\sigma$, $\sigma'$ are representatives
  for every input sympol in $\Sigma$. If $m>1$, the multiplexer is copied $m$
  times and wired accordingly. Step~4 can be performed in parallel for each $i$,
  hence the resulting size and depth bounds are $\BO(2^\ell m n)$ and
  $\BO(\ell)$, respectively.

  The other case is that $\max_{\sigma\in\Sigma} |\Asig|> 2^k$.
  A depiction of the corresponding circuit is given in
  Figure~\ref{fig:outputstage} (2), where $\operatorname{\sigma-\OR}$ denotes
  the $\OR$-tree over all $e^{(s_i)}_{A}$ for $A\in\Asigk$.
  Here, we compute the output $o(s_{i-1}, \sigma)$ as the $\OR$
  over all entries corresponding to $\Asigk$ in the state vector.
  Correctness readily follows from Lemma~\ref{lem:cover}.
  To bound the size and depth of the resulting circuit, observe that the cardinality
  of $\Asigk$ is bounded by $\binom{|S|}{2^k}$, hence each $\OR$-tree has size
  $\BO(\binom{|S|}{2^k})$ and depth $\BO(\log\binom{|S|}{2^k})$.
  As in the previous case, the $i$-th output is selected by a multiplexer.
  Hence, applying Corollary~\ref{col:cmux}, we get that in this case Step~4 requires size
  $\BO(\binom{|S|}{2^k} 2^\ell m n)$ and depth $\BO(\log\binom{|S|}{2^k} +
  \ell)$.

Furthermore, there is a natural cap on $m$. Similar to the previous paragraph,
  consider each bit of the output separately, i.e., for each position of an
  output symbol consider the function $o\colon S\times\Sigma\rightarrow\IB$.
  For inputs from $S$ and $\Sigma$ there are $2^{|S|\cdot|\Sigma|}$ different one bit functions.
  Hence, we can enumerate all possible output functions. If
  $m>2^{|S|\cdot|\Sigma|}$, we simply compute and reuse as often as needed the
  output of each of these possible functions, rather than replicating
  computations for identical output bits.

  Finally, we can derive the asymptotic size and depth of the presented
  $k$-bit hazard-free implementation of the transcription function $\tau_{T,n}$.
  We distinguish two cases depending on the size of the largest preimage of $o$.
  In both cases $m$ is capped at $2^{|S|\cdot|\Sigma|}$ as discussed above.
  Assume $\max_{\sigma\in\Sigma} |\Asig|\leq 2^k$, then the presented circuit has
  \begin{align*}
    &\text{size: }
    \BO\left((\kappa^3 + (2^{\ell}/\ell)\kappa^2 + 2^\ell m) n\right)
    \text{, and}\\
    &\text{depth: }
    \BO\left(\log\kappa\log n + \ell\right)
    \,.
  \end{align*}
  In case $\max_{\sigma\in\Sigma} |\Asig| > 2^k$ (and hence $|S|>2^k$) we bound $\binom{|S|}{2^k}$ by $\kappa$,
  such that the circuit has size
  \begin{equation*}
\BO\left((\kappa^3 + (2^{\ell}/\ell)\kappa^2 + 2^\ell \kappa m) n\right)\,,
  \end{equation*}
  and depth
  \begin{equation*}
\BO\left(\log\kappa\log n + \ell\right)\,.\IEEEQEDhereeqn
  \end{equation*}
\end{IEEEproof}

We remark that the main result holds also for $\Sigma\subseteq \IB^{\ell}$ when
choosing an (arbitrary) extension for the transition function to domain $S\times\IB^\ell$. However, the choice of how to
extend the transition function $t$ matters for
the behavior of the hazard-free state machine.
The decision on how to treat non-input symbols is important because it is possible that an
unstable input resolves to such a non-input symbol. If this happens, the choice of where
$t$ maps such resolutions to affects the value the hazard-free extension takes. A
``bad'' extension may result in unstable output without need, decreasing the
utility of the constructed circuit.
The task of finding a useful extension is nontrivial and depends on the application.
A detailed discussion at the hand of an example is given in Section~\ref{sec:inp_enc}.

\section{Extension of the input encoding}\label{sec:inp_enc}

The main result, Theorem~\ref{thm:trans}, assumes that each binary string of
length $\ell$ is part of the input alphabet, i.e., $\Sigma=\IB^\ell$.
In this section, we discuss the case where $\Sigma$ is only a strict subset of $\IB^\ell$.
We choose extensions of the transition function $t\colon\Sigma\times S$
and ouput function $o\colon\Sigma\times S$ to domain $\IB^\ell\times S$.
The choice of extension is arbitrary, for any transition function and
output function defined on $\IB^\ell$ the
construction described in the proof of the main theorem computes the
hazard-free extension of the transcription function.
An arbitrary extension may lead to undesirable results, however.
We discuss this issue at hand of an example.
Note that both, $t$ and $o$ have to be extended and the same problems may arise
in both functions.
We construct an example, that provides intuition on the problem, by extending the
output function $o$. The transition function $t$ is not discussed in further detail.

\subsubsection*{Example}
Consider a simple finite-state transducer with a single state ($S=\{0\}$) with
single bit outputs and output alphabet $\Lambda=\IB$.
Input symbols have three bits ($\ell=3$) and the output function computes the
$\AND$ over the three input bits\footnote{The $\AND$ over the three bits is defined by two sequential
$\AND$ gates (as given in Table~\ref{tab:gates}).}. See Table~\ref{tab:apx_out}
for the output table.
For the sake of the example, assume that input $000$ never occurs, such that the input alphabet is
defined by $\Sigma=\IB^3 \setminus \{000\}$.
Even though the example is artificial, in the sense that it could be solved by
simpler circuits, it is worth the discussion as it provides intuition to the problem.

Unknown inputs (which we model by $\mfu$) may arise from transitioning input signals.
Due to the analog behavior of
signal transitions, it might be the case that a bit is neither digital $0$ nor $1$
for some time during the transition. The signal can float near the threshold for
$0$ or $1$ and the circuit designer can not be sure whether the electronics
interpret the signal as digital $0$ or $1$.

\subsubsection*{Extension of the output function}
For an application of the framework, we have to define an extension to $o$ on input $000$. As there is only
a single output bit, we are left with exactly two choices:
\begin{align}
  o(0,000) &= 0\tag{1}\,\text{, or}\\
  o(0,000) &= 1\tag{2}\,.
\end{align}

Assume the input switches from $001$ to $010$, such that the input transitions via
$0\mfu\mfu$ for some time.
We choose extension $(2)$, then the framework
computes the hazard-free extension of the output function
\begin{align*}
  &o_\mfu(0,0\mfu\mfu)\\
  =&\bigstarr \{ o(0,000),o(0,001),o(0,010),o(0,011)\}\\
  =&\bigstarr\{1,0\}\\
  =&\:\mfu\,.
\end{align*}
In the sense of hazard-freeness, this is perfectly fine, because we compute the
most precise output for all specified inputs and the extension. However,
input $001$ and $010$ both produce output $0$. Hence, the output has no transition
when switching from $001$ to $010$. But, due to the chosen extension, the output is $\mfu$
during the transition.

\begin{table}
  \centering
  \caption{Output function of the example transducer and extensions $(1)$ and $(2)$.}
  \label{tab:apx_out}
  \begin{tabular}{c|c|c|c}
    $\sigma$ & $o(\sigma,0)$ & $(1)$ & $(2)$ \\ \hline
    $000$ & - & $0$ & $1$ \\
    $001$ & $0$ & $0$ & $0$ \\
    $010$ & $0$ & $0$ & $0$ \\
    $100$ & $0$ & $0$ & $0$ \\
    $011$ & $0$ & $0$ & $0$ \\
    $101$ & $0$ & $0$ & $0$ \\
    $110$ & $0$ & $0$ & $0$ \\
    $111$ & $1$ & $1$ & $1$
  \end{tabular}
\end{table}

Contrary to that, if we choose extension $(1)$ we obtain
\begin{align*}
  &o_\mfu(0,0\mfu\mfu)\\
  =&\bigstarr \{ o(0,000),o(0,001),o(0,010),o(0,011)\}\\
  =&\bigstarr\{0\}\\
  =&0\,.
\end{align*}
Hence, the output signal remains stable during the transition of the inputs.
Also one can verify that for each input transition, the output remains unless
there is a transition in the output (for any input transition to or from $111$).
Hence, the advisable choice of the extension for the specific application is
extension $(1)$.

To conclude, the example shows that finding an extension of the transition function,
in case $\Sigma$ is a strict subset of $\IB^\ell$, is not a straightforward task.
We refrain from defining a notion of the optimum choice and
generating the according extension, as this is beyond the scope of this work.

\section{Bound on $k$}\label{sec:example_sec}

We present an extension of our earlier example, the
shift transducer. Here, we switch from output $0$ to output $1$ only
if we see two consecutive $1$'s in the input. Vice versa, we switch from $1$'s
to $0$'s only we see two consecutive $0$ inputs. The transducer has three states
$S=\{0,1,2\}$. It is depicted in Figure~\ref{fig:ex_sec_trans}.

What we seek to demonstrate is the impact of bounding $k$.
Accordingly, we choose $k=1$, such that $2^k=2<3=|S|$. Thus, the encoding takes into account
$\mathcal{P}_2(S)=\mathcal{P}(S)\setminus S$, i.e., all proper subsets of $S$.
For restricted transition functions $t_0$, $t_1$, and $t_{\mfu}$, the encoding
yields the following matrices.

\begin{align*}
  \begin{blockarray}{cccccccc}
   \mathcal{M}_{t_0} =\quad & \emptyset & \{0\} & \{1\} & \{2\} & \{0,1\} & \{1,2\} & \{0,2\}\\
    \begin{block}{c(ccccccc)}
      \emptyset & 1 & 0 & 0 & 0 & 0 & 0 & 0 \\
      \{0\}   & 1 & 1 & 1 & 0 & 1 & 0 & 0 \\
      \{1\}   & 1 & 0 & 0 & 1 & 0 & 0 & 0 \\
      \{2\}   & 1 & 0 & 0 & 0 & 0 & 0 & 0 \\
      \{0,1\} & 1 & 1 & 1 & 1 & 1 & 1 & 1 \\
      \{1,2\} & 1 & 0 & 0 & 1 & 0 & 0 & 0 \\
      \{0,2\} & 1 & 1 & 1 & 0 & 1 & 0 & 0 \\
    \end{block}
  \end{blockarray} \\
\begin{blockarray}{cccccccc}
    \mathcal{M}_{t_1}=\quad& \emptyset & \{0\} & \{1\} & \{2\} & \{0,1\} & \{1,2\} & \{0,2\}\\
    \begin{block}{c(ccccccc)}
      \emptyset & 1 & 0 & 0 & 0 & 0 & 0 & 0 \\
      \{0\}   & 1 & 0 & 0 & 0 & 0 & 0 & 0 \\
      \{1\}   & 1 & 1 & 0 & 0 & 0 & 0 & 0 \\
      \{2\}   & 1 & 0 & 1 & 1 & 0 & 1 & 0 \\
      \{0,1\} & 1 & 1 & 0 & 0 & 0 & 0 & 0 \\
      \{1,2\} & 1 & 1 & 1 & 1 & 1 & 1 & 1 \\
      \{0,2\} & 1 & 0 & 1 & 1 & 0 & 1 & 0 \\
    \end{block}
  \end{blockarray} \\
\begin{blockarray}{cccccccc}
    \mathcal{M}_{t_{\mfu}}=\quad& \emptyset & \{0\} & \{1\} & \{2\} & \{0,1\} & \{1,2\} & \{0,2\}\\
    \begin{block}{c(ccccccc)}
      \emptyset & 1 & 0 & 0 & 0 & 0 & 0 & 0 \\
      \{0\}   & 1 & \mfu & \mfu & 0 & \mfu & 0 & 0 \\
      \{1\}   & 1 & \mfu & 0 & \mfu & 0 & 0 & 0 \\
      \{2\}   & 1 & 0 & \mfu & \mfu & 0 & \mfu & 0 \\
      \{0,1\} & 1 & 1 & \mfu & \mfu & \mfu & \mfu & \mfu \\
      \{1,2\} & 1 & \mfu & \mfu & 1 & \mfu & \mfu & \mfu \\
      \{0,2\} & 1 & \mfu & 1 & \mfu & \mfu & \mfu & 0 \\
    \end{block}
  \end{blockarray}
\end{align*}

In matrix $\mathcal{M}_{t_{\mfu}}$, we see again the same behavior as in the
naive encoding of the shift transducer. Columns corresponding to $\{0,1\}$,
$\{1,2\}$, and $\{0,2\}$ contain only $0$ and $\mfu$ entries. These columns
hence cannot prevent the resolution from containing the all $0$'s vector, which
cannot be mapped to a single state in a way that guarantees correct output.

\begin{figure}
  \centering
  \includegraphics{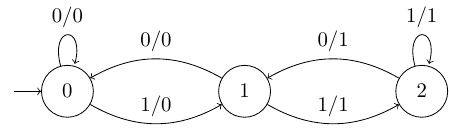}
  \caption{Extension of the first example transducer, that shifts to outputting
  ones only after reception of two consecutive ones. Vice versa the transducer
  shifts to outputting zeros upon reception of two consecutive zeros.}
  \label{fig:ex_sec_trans}
\end{figure}

However, columns corresponding to $\{0\}$, $\{1\}$, and $\{2\}$ can be mapped to
states without uncertainty, presuming the next input symbol allows it.
Taking a look at Figure~\ref{fig:ex_sec_trans}, we can see that the transducer
goes back to state $0$ after reception of two consecutive $0$ inputs, regardless
of the previous state. For the sake of this example, consider the input
sequence $\mfu 0 0$. Hence, we need to calculate the matrix corresponding to
$t_0\circ t_0\circ t_{\mfu}$:

\begin{align*}
  \mathcal{M}_{t_0\circ t_0\circ t_{\mfu}} = \mathcal{M}_{t_0}\cdot\mathcal{M}_{t_0}\cdot\mathcal{M}_{t_{\mfu}}
=\begin{pmatrix}
    1 & 0 & 0 & 0 & 0 & 0 & 0 \\
    1 & 1 & 1 & 1 & \mfu & \mfu & \mfu \\
    1 & 0 & 0 & 0 & 0 & 0 & 0 \\
    1 & 0 & 0 & 0 & 0 & 0 & 0 \\
    1 & 1 & 1 & 1 & \mfu & \mfu & \mfu \\
    1 & 0 & 0 & 0 & 0 & 0 & 0 \\
    1 & 1 & 1 & 1 & \mfu & \mfu & \mfu \\
  \end{pmatrix}.
\end{align*}

Here it becomes apparent why we need the assumption $|A|\cdot|J|\leq 2^k$ in
Theorem~\ref{thm:trans}.
In matrix $\mathcal{M}_{t_0\circ t_0\circ t_{\mfu}}$ only column unit vectors
corresponding to singletons can be mapped to stable states. Column vectors corresponding
to sets $\{0,1\}$, $\{1,2\}$ and $\{0,2\}$ contain only $0$'s and $\mfu's$, no $1$'s.
Hence, they cannot be ``properly'' mapped to any state.
We apply Theorem~\ref{thm:trans} to the last step, i.e., the multiplication of
$\mathcal{M}_{t_0}$ and $\mathcal{M}_{t_0\circ t_{\mfu}}$. Here, set $J$ is mapped to
$\res(0\mfu)$ yielding $|J|=2$. As $k=1$, condition $|A|\cdot|J|\leq 2^k$ is only satisfied
for $|A|=1$ ($A$ cannot be the empty set). Thus, the condition is only satisfied
for singleton sets, which is what we observed from the example.

\section{Conclusion}\label{sec:conclusion}
Any generic construction for hazard-free circuits incurs an exponential blow-up
in circuit complexity~\cite{ikenmeyer18complexity}.
In this work, we present a generic construction for transducers that is
asymptotically optimal in the size of the inputs $n$, but yields an exponential
overhead in the size of the transducer. By Theorem~\ref{thm:trans}, for a
transducer with $|S|$ states we obtain a circuit of size
$\BO(\kappa^3n + 2^{\ell}\kappa^2n + 2^{\ell}\kappa\lambda n)$ and depth
$\BO(\log\kappa\log n + \ell)$, where $\ell$ is the maximum number of bits of an
input symbol, $k$ is the maximum number of uncertain bits in the input, and
$\kappa^2$ is the size of the universal function encoding.
The universal encoding of functions is a key ingredient in the construction
of the circuit.
Together with Theorem~\ref{thm:extension_commutes}, we show
the correctness of the construction. The Theorem states that,
for the chosen encoding, the superposition of function composition is the
matrix product of the superpositions of both functions.
We specify our findings in Corollary~\ref{cor:general} and Corollary~\ref{cor:smallk}.

This opens the stage for a $k$-bit hazard-free implementation of addition.
We remark that to make hazard-freeness meaningful in the context of addition, one has to limit the
uncertainty of the input concerning the encoded sum and choose appropriate
encodings. The relevant definitions exceed the scope of this submission.
In a preliminary publication~\cite{bund2018small}, we show that it is possible to apply this construction to obtain efficient
circuits for addition that avoid certain hazards. This
demonstrates that the construction we present here is of interest and, in our
view, has surprising consequences.

 
\ifCLASSOPTIONcaptionsoff
  \newpage
\fi

\bibliographystyle{IEEEtran}
\bibliography{IEEEabrv, comp}

\begin{thebibliography}{10}
\providecommand{\url}[1]{#1}
\csname url@samestyle\endcsname
\providecommand{\newblock}{\relax}
\providecommand{\bibinfo}[2]{#2}
\providecommand{\BIBentrySTDinterwordspacing}{\spaceskip=0pt\relax}
\providecommand{\BIBentryALTinterwordstretchfactor}{4}
\providecommand{\BIBentryALTinterwordspacing}{\spaceskip=\fontdimen2\font plus
\BIBentryALTinterwordstretchfactor\fontdimen3\font minus
  \fontdimen4\font\relax}
\providecommand{\BIBforeignlanguage}[2]{{%
\expandafter\ifx\csname l@#1\endcsname\relax
\typeout{** WARNING: IEEEtran.bst: No hyphenation pattern has been}%
\typeout{** loaded for the language `#1'. Using the pattern for}%
\typeout{** the default language instead.}%
\else
\language=\csname l@#1\endcsname
\fi
#2}}
\providecommand{\BIBdecl}{\relax}
\BIBdecl

\bibitem{bund2022small}
\BIBentryALTinterwordspacing
J.~Bund, C.~Lenzen, and M.~Medina, ``Small hazard-free transducers,'' in
  \emph{13th Innovations in Theoretical Computer Science Conference, {ITCS}
  2022, January 31 - February 3, 2022, Berkeley, CA, {USA}}, ser. LIPIcs,
  M.~Braverman, Ed., vol. 215.\hskip 1em plus 0.5em minus 0.4em\relax Schloss
  Dagstuhl - Leibniz-Zentrum f{\"{u}}r Informatik, 2022, pp. 32:1--32:24.
  [Online]. Available: \url{https://doi.org/10.4230/LIPIcs.ITCS.2022.32}
\BIBentrySTDinterwordspacing

\bibitem{kleene52meta}
S.~C. Kleene, \emph{{Introduction to Metamathematics}}.\hskip 1em plus 0.5em
  minus 0.4em\relax North Holland, 1952.

\bibitem{ikenmeyer18complexity}
C.~Ikenmeyer, B.~Komarath, C.~Lenzen, V.~Lysikov, A.~Mokhov, and
  K.~Sreenivasaiah, ``{On the complexity of hazard-free circuits},''
  \emph{Journal of the ACM}, vol.~66, no.~4, pp. 25:1--25:20, 2019.

\bibitem{Mar81}
L.~Marino, ``{General Theory of Metastable Operation},'' \emph{IEEE
  Transactions on Computers}, vol. C-30, no.~2, pp. 107--115, 1981.

\bibitem{friedrichs18containing}
S.~Friedrichs, M.~F{\"{u}}gger, and C.~Lenzen, ``{Metastability-Containing
  Circuits},'' \emph{IEEE Transactions on Computers}, vol.~67, 2018.

\bibitem{BLM-TC-19}
J.~Bund, C.~Lenzen, and M.~Medina, ``{Optimal Metastability-Containing Sorting
  via Parallel Prefix Computation},'' \emph{Trans. on Computers}, 2019.

\bibitem{huffman57design}
D.~A. Huffman, ``{The Design and Use of Hazard-Free Switching Networks},''
  \emph{Journal of the ACM}, vol.~4, no.~1, pp. 47--62, 1957.

\bibitem{ladner1980parallel}
R.~E. Ladner and M.~J. Fischer, ``{Parallel Prefix Computation},''
  \emph{Journal of the ACM (JACM)}, vol.~27, no.~4, pp. 831--838, 1980.

\bibitem{mealy1955method}
G.~H. Mealy, ``A method for synthesizing sequential circuits,'' \emph{The Bell
  System Technical Journal}, vol.~34, no.~5, pp. 1045--1079, 1955.

\bibitem{graham1994concrete}
R.~L. Graham, D.~E. Knuth, and O.~Patashnik, \emph{Concrete Mathematics: A
  Foundation for Computer Science}, 2nd~ed.\hskip 1em plus 0.5em minus
  0.4em\relax Addison-Welsey Publishing Company, 1994.

\bibitem{marino1977effect}
L.~R. Marino, ``{The Effect of Asynchronous Inputs on Sequential Network
  Reliability},'' \emph{IEEE Transactions on computers}, no.~11, pp.
  1082--1090, 1977.

\bibitem{pechoucek1976anomalous}
M.~Pechoucek, ``{Anomalous Response Times of Input Synchronizers},'' \emph{IEEE
  Transactions on Computers}, vol. 100, no.~2, pp. 133--139, 1976.

\bibitem{stucki79}
M.~J. Stucki and J.~R. Cox, ``{Synchronization Strategies},'' in
  \emph{Proceedings of the Caltech Conference On Very Large Scale Integration},
  1979, pp. 375--393.

\bibitem{wormald77synchronizer}
E.~G. Wormald, ``{A Note on Synchronizer or Interlock Maloperation},''
  \emph{IEEE Transactions on Computers}, vol. C-26, no.~3, pp. 317--318, 1977.

\bibitem{kinniment08synchronization}
D.~J. Kinniment, \emph{{Synchronization and Arbitration in Digital
  Systems}}.\hskip 1em plus 0.5em minus 0.4em\relax Wiley, 2008.

\bibitem{goto49relay}
M.~Goto, ``Application of logical mathematics to the theory of relay networks
  (in {J}apanese),'' \emph{J. Inst. Elec. Eng. of Japan}, vol.~64, no. 726, pp.
  125--130, 1949.

\bibitem{hu12complexity}
W.~Hu, J.~Oberg, A.~Irturk, M.~Tiwari, T.~Sherwood, D.~Mu, and R.~Kastner,
  ``{On the Complexity of Generating Gate Level Information Flow Tracking
  Logic},'' \emph{IEEE Transactions on Information Forensics and Security},
  vol.~7, no.~3, pp. 1067--1080, 2012.

\bibitem{tiwari09flow}
M.~Tiwari, H.~M. Wassel, B.~Mazloom, S.~Mysore, F.~T. Chong, and T.~Sherwood,
  ``{Complete Information Flow Tracking from the Gates Up},'' \emph{SIGARCH
  Comput. Archit. News}, vol.~37, no.~1, pp. 109--120, 2009.

\bibitem{brzozowski01algebras}
J.~Brzozowski, Z.~Esik, and Y.~Iland, ``{Algebras for Hazard Detection},'' in
  \emph{Proc. 31st International Symposium on Multiple-Valued Logic}, 2001.

\bibitem{alon87monotone}
N.~Alon and R.~B. Boppana, ``{The Monotone Circuit Complexity of Boolean
  Functions},'' \emph{Combinatorica}, vol.~7, no.~1, pp. 1--22, 1987.

\bibitem{tardos88gap}
{\'E}.~Tardos, ``{The Gap between Monotone and Non-monotone Circuit Complexity
  is Exponential},'' \emph{Combinatorica}, vol.~8, no.~1, pp. 141--142, 1988.

\bibitem{mehlhorn76monotone}
K.~Mehlhorn and Z.~Galil, ``{Monotone Switching Circuits and Boolean Matrix
  Product},'' \emph{Computing}, vol.~16, no.~1, pp. 99--111, Mar 1976.

\bibitem{paterson75complexity}
M.~S. Paterson, ``{Complexity of Monotone Networks for Boolean Matrix
  Product},'' \emph{Theoretical Computer Science}, vol.~1, no.~1, pp. 13--20,
  1975.

\bibitem{jukna2021notes}
S.~Jukna, ``Notes on hazard-free circuits,'' \emph{SIAM Journal on Discrete
  Mathematics}, vol.~35, no.~2, pp. 770--787, 2021.

\bibitem{tarawneh12hiding}
G.~Tarawneh and A.~Yakovlev, ``{An RTL Method for Hiding Clock Domain Crossing
  Latency},'' in \emph{{Electronics, Circuits, and Systems (ICECS)}}, 2012, pp.
  540--543.

\bibitem{tarawneh14eliminating}
G.~Tarawneh, A.~Yakovlev, and T.~S.~T. Mak, ``{Eliminating Synchronization
  Latency Using Sequenced Latching},'' \emph{{IEEE} Transactions on {VLSI}
  Systems}, vol.~22, no.~2, pp. 408--419, 2014.

\bibitem{swartzlander15arithmetic}
E.~E. Swartzlander and C.~E. Lemonds, Eds., \emph{{Computer Arithmetic}}.\hskip
  1em plus 0.5em minus 0.4em\relax World Scientific Publishing Co, 2015, vol.
  I--III.

\bibitem{sipser2012introduction}
M.~Sipser, \emph{Introduction to the Theory of Computation}, 3rd~ed.\hskip 1em
  plus 0.5em minus 0.4em\relax Cengage Learning, 2012.

\bibitem{bund2018small}
J.~Bund, C.~Lenzen, and M.~Medina, ``Small hazard-free transducers,''
  \emph{arXiv preprint arXiv:1811.12369}, 2018.

\end{thebibliography}

\end{document}